\documentclass[11pt,a4paper,reqno]{amsart}%
\usepackage[latin1]{inputenc}
\usepackage{mathrsfs}
\usepackage{dsfont}
\usepackage{hyperref}
\usepackage{amsmath}
\usepackage{amssymb}
\usepackage{amsthm}
\usepackage{amsfonts}
\usepackage{amstext}
\usepackage{amsopn}
\usepackage{amsxtra}
\usepackage{mathrsfs}
\usepackage{dsfont}
\usepackage{esint}
\usepackage{pst-all}
\usepackage{graphicx}
%
%
%ENVIRONMENTS
\theoremstyle{plain}
\newtheorem{theorem}{Theorem}[section]
\newtheorem{lemma}[theorem]{Lemma}
\newtheorem{corollary}[theorem]{Corollary}

\theoremstyle{definition}

\theoremstyle{remark}

\numberwithin{equation}{section}

\newcommand{\ii}{\infty}
\newcommand\R{{\ensuremath {\mathbb R} }}
\newcommand\C{{\ensuremath {\mathbb C} }}

\newcommand\1{{\ensuremath {\mathds 1} }}
\renewcommand\phi{\varphi}
\newcommand{\gH}{\mathfrak{H}}
\newcommand{\bH}{\mathbb{H}}

\newcommand{\wto}{\rightharpoonup}

\newcommand{\cM}{\mathcal{M}}

\newcommand{\cE}{\mathcal{E}}

\newcommand{\cF}{\mathcal{F}}
\newcommand{\cN}{\mathcal{N}}

\renewcommand{\epsilon}{\varepsilon}
\newcommand\pscal[1]{{\ensuremath{\left\langle #1 \right\rangle}}}
\newcommand{\norm}[1]{ \left| \! \left| #1 \right| \! \right| }
\newcommand{\tr}{{\rm Tr}\,}

\renewcommand{\ge}{\geqslant}

\renewcommand{\geq}{\geqslant}
\renewcommand{\leq}{\leqslant}

\renewcommand{\tilde}{\widetilde}

\newcommand{\nn}{\nonumber}
\newcommand{\eH}{\ensuremath{e_{\text{\textnormal{GP}}}}}

\newcommand{\dGamma}{{\rm d}\Gamma}
%ARTICLE
\title{Mean-Field limit of Bose systems: rigorous results}

\author[M. Lewin]{Mathieu Lewin}
\address{CNRS \& Universit\'e Paris-Dauphine, CEREMADE (UMR 7534), Place de Lattre de Tassigny, F-75775 Paris Cedex 16, France} 
\email{mathieu.lewin@math.cnrs.fr}

\date{\today}

\begin{document}

\begin{abstract}
We review recent results about the derivation of the Gross-Pitaevskii equation and of the Bogoliubov excitation spectrum, starting from many-body quantum mechanics. We focus on the mean-field regime, where the interaction is multiplied by a coupling constant of order $1/N$ where $N$ is the number of particles in the system. 

\medskip

\noindent\scriptsize Proceedings from the International Congress of Mathematical Physics at Santiago de Chile, July 2015.
\end{abstract}

\maketitle

\section{The Gross-Pitaevskii equation: a success story}

\subsection{The curse of dimensionality} 
Due to its high dimension, the many-body Schrödinger equation is impossible to solve numerically at a high precision for most physical systems of interest. It is therefore important to rely on simpler approximations, that are both precise enough and suitable to numerical investigation. One of the most famous is the \emph{mean-field model}, which consists in assuming that the particles are independent but evolve in an effective, self-consistent, one-body potential that replaces the many-particle interaction. Thereby, the high dimensional \emph{linear} many-body Schrödinger equation is replaced by a more tractable \emph{nonlinear} one-body equation. The paradigm is that the nonlinearity of the effective model should be able to reproduce the physical subtleties of the interaction in the real model. Introduced by Curie and Weiss to describe phase transitions in the classical Ising model, the mean-field method is now extremely popular in many areas of physics, and has even spread to other fields like biology and social sciences. 

In mathematical terms, the mean-field potential can be understood with the law of large numbers. Indeed, if the particles are independent and identically distributed according to a probability density $\rho$, then the interactions between the $j$th particle and all the others behave for a large number $N$ of particles as
\begin{equation}
\frac{1}{N-1}\sum_{\substack{k=1\\ k\neq j}}^Nw(x_j-x_k)\simeq \int_{\R^d} w(x_j-y)\,\rho(y)\,dy=w\ast\rho(x_j),
\label{eq:law_large_numbers}
\end{equation}
which is by definition the mean-field potential.

\subsection{A nonlinear equation}
One of the most famous mean-field model has been introduced by Gross~\cite{Gross-58} and Pitaevskii~\cite{Pitaevskii-61} for Bose gases, and it is similar to the Ginzburg-Landau theory of superconductivity. Historically designed to describe quantized vortices in superfluid Helium (in which it applies to only a small fraction of the particles), the Gross-Pitaevskii (GP) equation is now the main tool to understand the Bose-Einstein condensates which have been produced in the laboratory with ultracold trapped gases, since the 90s~\cite{CorWie-95,Ketterle-95,FetFoo-12}. 

For $N$ bosons in $\R^d$ with $d=1,2,3$, the GP equation takes the form
\begin{equation}
hu_0+(N-1)\big(|u_0|^2\ast w\big)u_0=\epsilon_0\,u_0,\qquad \int_{\R^d}|u_0|^2=1,
\label{eq:GP}
\end{equation}
where $u_0$ is the quantum state common to the condensed (independent) particles. This equation can be obtained by minimizing the Gross-Pitaevskii energy
\begin{equation}
\cE(u)=\int_{\R^d}u(x)^* (hu)(x)\,dx+\frac{N-1}{2}\int_{\R^d}\int_{\R^d}w(x-y)|u(x)|^2|u(y)|^2\,dx\,dy.
\label{eq:GP_energy_intro}
\end{equation}
In the literature, the name ``Hartree'' is often used in place of ``Gross-Pitaevskii'' when $w$ is a smooth function. When $w$ is proportional to a Dirac delta, one often uses the acronym NLS for ``Nonlinear Schrödinger''.

In Equation~\eqref{eq:GP}, the mean-field potential is $(N-1)\big(|u_0|^2\ast w\big)$ due to~\eqref{eq:law_large_numbers}, and $h$ is the noninteracting one-body Hamiltonian. We take
$$h=\big(-i\nabla+A(x)\big)^2+V(x)$$
where $A$ plays the role of a magnetic potential but usually arises from a different physical origin (for instance, for rotating trapped gases, in the rotating frame $A$ would describe the Coriolis force). The potential $V$ includes both the external potential applied to the system (for instance a harmonic potential in a harmonic trap) as well as the centrifugal force for a rotating system. If the particles have an internal degree of freedom taking $n$ possible values, then we allow $V(x)$ to be a $n\times n$ hermitian matrix. More complicated models can be considered without significantly changing the results presented in this paper. We want to keep $A$ and $V$ as general as possible and we will only distinguish two situations: the case of \emph{trapped systems} where $h$ is assumed to have a compact resolvent, and \emph{untrapped or locally trapped systems} where $A$ and $V$ tend to zero at infinity. The interaction potential $w$ will always be assumed to vanish at infinity in a suitable sense, and to be an even function.

The equation~\eqref{eq:GP} has been used with impressive success to describe Bose-Einstein condensates. For 3D trapped dilute gases of alkaline atoms, %we must take 
$$w(x)=8\pi \, a\, \delta(x)$$
where $a$ is the scattering length of the interaction, leading to the nonlinear Schrödinger equation with a cubic nonlinearity
\begin{equation}
hu_0+8\pi a (N-1)|u_0|^2u_0=\epsilon_0\,u_0. 
 \label{eq:NLS}
\end{equation}
In dipolar gases the long-range dipole-dipole interaction must be added and 
$$w(x)=8\pi \, a\, \delta(x)+\kappa\frac{1-3\cos^2\theta}{|x|^3}$$
where $\theta$ is the angle of $x$ relative to the fixed dipole direction. These two examples lead to the simplest equations, but general $w$'s are also of physical interest. It is possible to experimentally produce tunable long-range interactions in a dilute gase by making it interact with a cavity~\cite{Mottl-12}.

The nonlinear character of the Gross-Pitaevskii equation~\eqref{eq:GP} has been shown to properly reproduce some intriguing properties of Bose-Einstein condensates, through the breaking of symmetries. A famous example is that of the vortices that appear in rotating gases, which are perfectly well described by GP theory (Figure~\ref{fig:vortices})~\cite{Yngvason-14,Aftalion-06,Fetter-09}.

\begin{figure}[h]
\centering
\includegraphics[height=5cm]{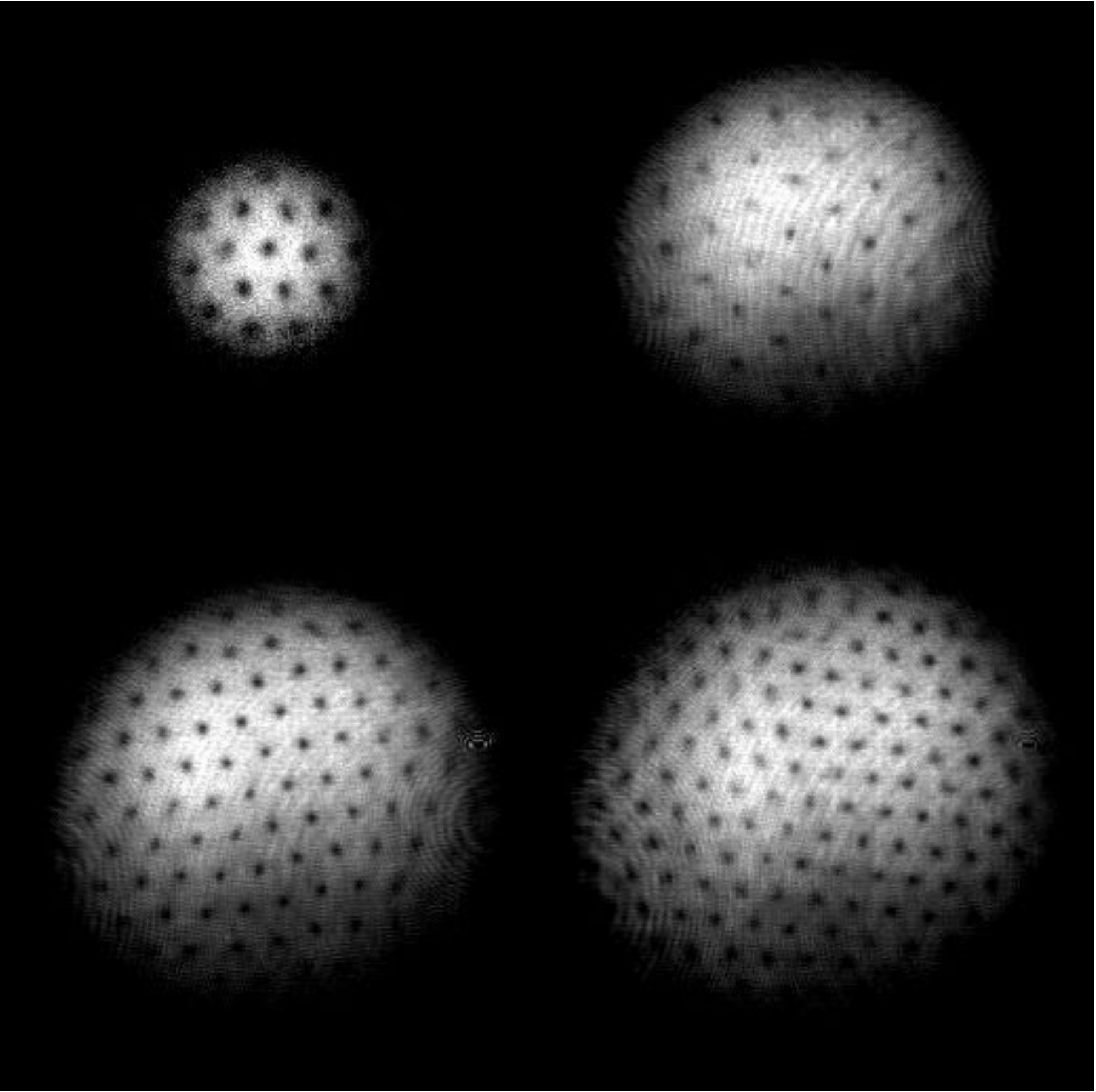}
\includegraphics[height=5.4cm]{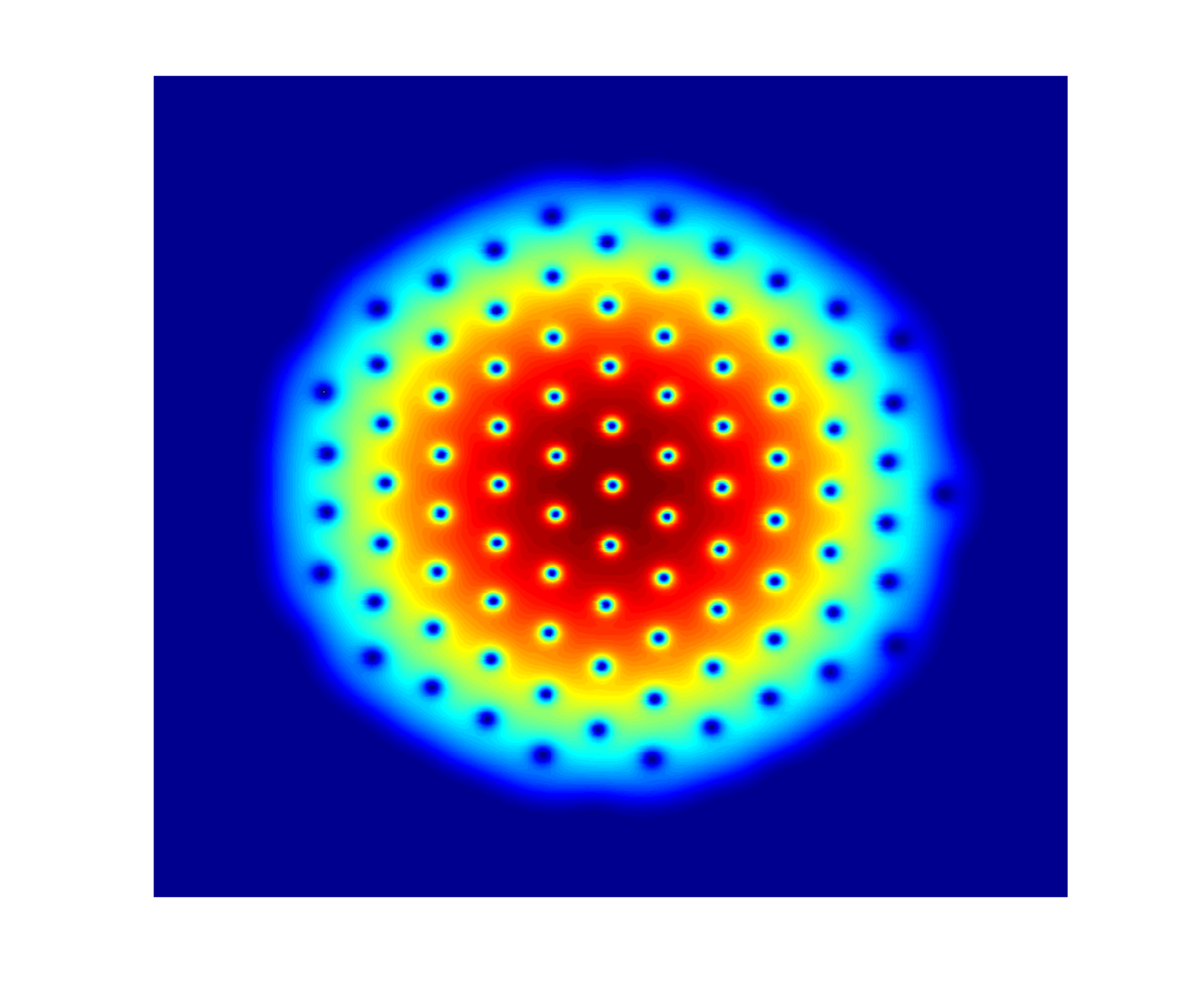}
\caption{\small \emph{Left:} Experimental pictures of fast rotating Bose-Einstein condensates, taken from~\cite{Ketterle-01}, \copyright\; AAAS.
  \emph{Right:} Numerical calculation of the Gross-Pitaevskii solution with the software GPELab~\cite{AntDub-14a,AntDub-14b} in the corresponding regime. 
\label{fig:vortices}}
\end{figure}

\subsection{Bogoliubov excitation spectrum}
Another important feature of the GP equation is its ability to explain the superfluid character of these Bose gases, which is understood in terms of the excitation spectrum~\cite{Landau-41,Bogoliubov-47}. Bogoliubov's theory predicts that the excited energies will be given by the corresponding Bogoliubov Hamiltonian
\begin{multline}
\bH_{u_0}=\int a^\dagger(x)\big(h+(N-1)\,|u_0|^2\ast w-\epsilon_0\big)a(x)\,dx\\+(N-1)\int\int u_0(x)\overline{u_0(y)}w(x-y)a^\dagger(x)a(y)\,dx\,dy\\
+\frac{N-1}2\int\int u_0(x)u_0(y)w(x-y)a^\dagger(x)a^\dagger(y)\,dx\,dy+c.c. 
\label{eq:Bogoliubov}
\end{multline}
which is the second quantization of the Hessian of the Gross-Pitaevskii energy at the solution $u_0$. This second-quantized Hamiltonian is defined on the bosonic Fock space with the mode $u_0$ removed. For spinless particles in a box with periodic boundary conditions and when $u_0\equiv1$, the spectrum of the Hamiltonian $\bH_{u_0}$ is given in terms of the elementary excitations 
$$\sqrt{|k|^4+2(2\pi)^{d/2}(N-1)\widehat{w}(k)|k|^2}.$$
When $w\propto \delta$, the effective dispersion relation is increasing and linear at small momentum, a fact that has been confirmed in experiments for alkaline cold gases in~\cite{VogXuRamAboKet-02,SteOzeKatDav-02}. For other interactions the excitation spectrum can display the famous roton local minimum (Figure~\ref{fig:phonon-roton}). This has been recently experimentally confirmed for dilute gases with a spin-orbit term in~\cite{JiZhaXuWu-15} and for long-range interactions mediated through a cavity in~\cite{Mottl-12}. For dipolar gases, the roton spectrum is so far only a theoretical prediction~\cite{SanShlLew-03}.

\begin{figure}[h]
\centering
\includegraphics[width=4cm]{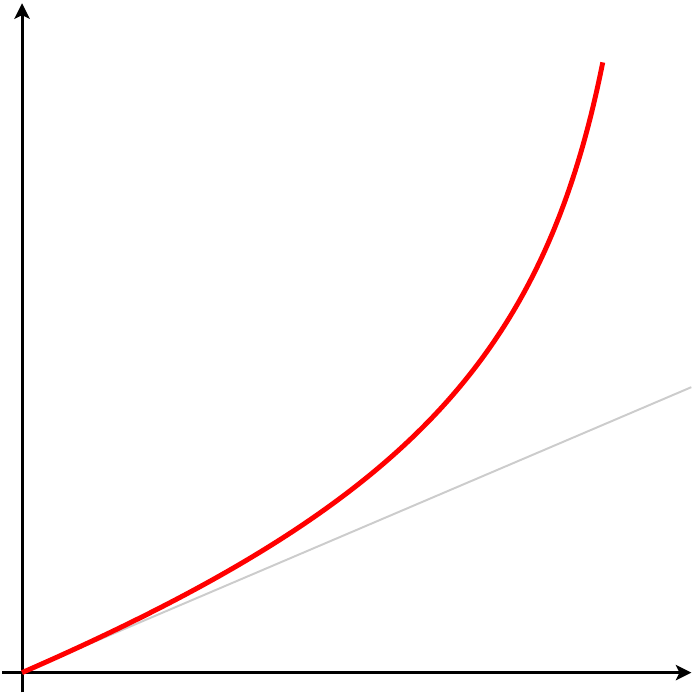}\hspace{1cm}\includegraphics[width=4cm]{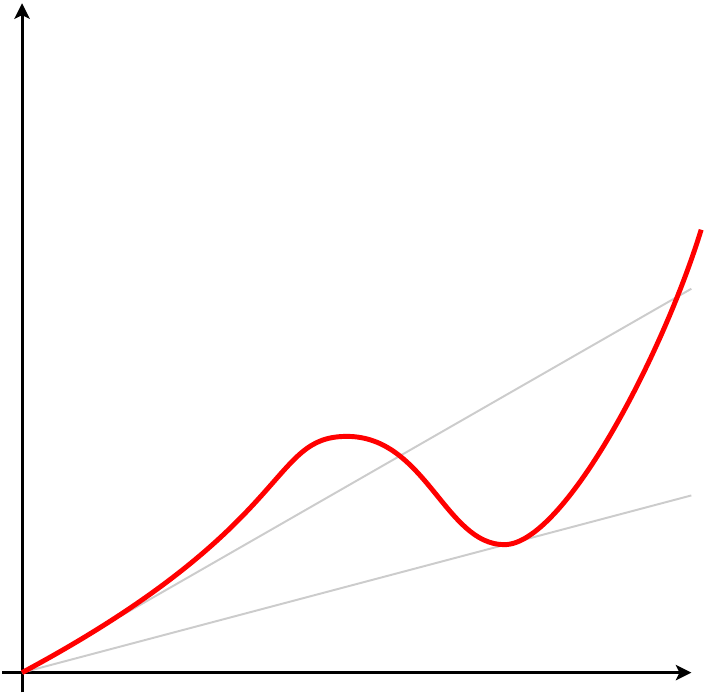}
\caption{\small Pure phonon excitation spectrum (\emph{left}) when $w\propto \delta$, as opposed to the phonon-maxon-roton excitation spectrum (\emph{right}) which can be obtained for a more general $w$ and has been experimentally confirmed in some cases.
\label{fig:phonon-roton}}
\end{figure}

\subsection{Validity of Gross-Pitaevskii and Bogoliubov}

The successes of the Gross-Pitaevskii equation (and of the associated Bogoliubov predictions) have stimulated many works in mathematical physics. Many authors have worked on deriving properties of solutions to the GP equation, with a particular focus on the vortices. A different route consists in trying to justify the GP approximation starting from first principles (that is, the many-body Schrödinger equation) and the main goal of this proceeding is to review recent advances in this direction.

The proof of Bose-Einstein condensation requires to understand how independence can arise in an interacting system. Clearly the interactions will have to be weak (but not too much such as to remain in the final effective GP equation). There are (at least) two ways this could happen. The first is when the interactions are \textbf{rare}. That is, the particles in the system meet very rarely and, when they do, they interact with a $w$ of order one. This situation corresponds to the \textbf{dilute regime} which is appropriate for the ultracold Bose gases produced in the lab. Another regime is when the interactions are \textbf{small in amplitude}, that is, $w$ is small itself. This is the typical case for the law of large number~\eqref{eq:law_large_numbers} to hold, since the factor $1/(N-1)$ there can be identified to a small coupling constant $\lambda=1/(N-1)$. In order to end up with an interesting model, we then need many collisions and this now corresponds to a \textbf{high density regime} where the particles meet very often but interact only a little bit each time.

\begin{table}[h]
\begin{tabular}{|l|l|}
\hline
low density & rare interactions of intensity 1\\
\hline
high density & frequent interactions of small intensity $\lambda\sim 1/N$\\
\hline
\end{tabular}

\bigskip

\caption{\small Two completely opposite physical situations in which the GP equation will be valid, for different reasons. In the mathematical physics literature, the low density regime is often called the \emph{Gross-Pitaevskii limit} whereas the high density regime is often called the \emph{mean-field limit}.}
\end{table}

These two regimes are very different but they lead to somewhat similar effects. This similarity is both very useful and confusing. We will explain later the exact difference, but the philosophy is that everything that holds in the high density regime with a potential $w/N$ holds similarly in the low density regime with the potential $8\pi a\delta$ where $a$ is the scattering length of $w$. The occurrence of the scattering length is due to a two-particle scattering process that induces short-distance strong correlations and soils the hoped-for independence between the particles. This phenomenon is very difficult to capture mathematically~\cite{LieSeiSolYng-05}.
In this paper we will review recent (and older) results concerning the second regime, which is much simpler mathematically. As we will see, new tools developed in simple cases can sometimes be useful in more complicated situations.

%%%%%%%%%%%%%%%%%%%%%%%%%%%%%%%%%%%%%%%%%%
\section{The mean-field model}
%%%%%%%%%%%%%%%%%%%%%%%%%%%%%%%%%%%%%%%%%%

We assume that the particles interact with a potential $\lambda w$, leading to the many-particle Hamiltonian
\begin{equation}
 \boxed{H_N=\sum_{j=1}^N h_{x_j}+\lambda\sum_{1\leq j<k\leq N}w(x_j-x_k),} 
 \label{eq:H_N}
\end{equation}
acting on the bosonic space $\bigotimes_s^NL^2(\R^d)$. We are interested in the regime of a large number of particles $N\gg1$ with small coupling constant $\lambda\ll1$. In view of the discussion in the previous section and, in particular, of the law of large numbers~\eqref{eq:law_large_numbers}, it is natural to assume that $\lambda\sim1/N$. This makes the two terms in $H_N$ of the same order $N$. Without loss of generality and in order to simplify some expressions, we will take
$$\boxed{\lambda=\frac{1}{N-1}.}$$
For this model we will prove the validity of the Gross-Pitaevskii and Bogoliubov theories in the limit $N\to\ii$, at least for the particles that do not escape to infinity. These particles will be essentially packed in finite regions of space, creating a local density of order $N\gg1$. Since the GP equation will always be right for $H_N$ in the limit of large $N$, it is customary to call it the \emph{mean-field model}.

The mean-field model is not a physical model since there is no reason to believe that the interaction would be multiplied by $1/N$ in a real system. However it can be used for dilute gases with long-range tunable interactions~\cite{Mottl-12}, as well as for some other systems that can be recast in the form~\eqref{eq:H_N} in an appropriate regime. This includes for instance bosonic atoms~\cite{BenLie-83,Solovej-90,Bach-91,BacLewLieSie-93,Kiessling-12} when the number of electrons is proportional to the nuclear charge, and stars~\cite{LieYau-87}. The Hamiltonian $H_N$ is an interesting mathematical prototype for Bose-Einstein condensation, which can teach us new things that could be useful in more physical situations.

\medskip

The most natural way to understand the occurrence of the GP equation is to \emph{assume} that the particles are \emph{all exactly independent}. As we will explain, this is certainly wrong as such, since a small proportion of the particles will always behave differently. But this simple-minded argument gives the right energy on first order. Mathematically, this corresponds to choosing 
$$\Psi(x_1,...,x_N)=u(x_1)\cdots u(x_N):=u^{\otimes N}(x_1,...,x_N)$$
for some normalized function $u\in L^2(\R^d)$. Computing the quantum energy using $\lambda=1/(N-1)$ gives 
$\pscal{u^{\otimes N},H_Nu^{\otimes N}}=N\,\cE(u)$
where 
\begin{equation}
\cE(u)=\pscal{u,hu}+\frac{1}{2}\int_{\R^d}\int_{\R^d}w(x-y)|u(x)|^2|u(y)|^2\,dx\,dy.
\label{eq:GP_energy}
\end{equation}
Minimizers of $\cE$ satisfy the GP equation
\begin{equation}
hu_0+\big(|u_0|^2\ast w\big)u_0=\epsilon_0\,u_0,
\label{eq:GP2}
\end{equation}
where $\epsilon_0$ is a Lagrange multiplier due to the normalization constraint. In general there is no reason to expect that minimizers will be unique and, indeed, uniqueness should definitely \emph{not} hold when there are vortices. Uniqueness can be broken because of the one-body Hamiltonian $h$ (e.g. for rotating gases), or due to the interaction if it has an attractive part.

\medskip

To avoid any confusion, let us now compare the mean-field model to the dilute limit in 3D. Take for simplicity $A=V=0$ and put $N$ particles interacting with a potential $w$ in a box of volume $\ell^3$. For a repulsive potential the particles will be spread over the whole box at an average distance $N^{-1/3}\ell$ to one another. The dilute limit corresponds to taking $N\to\ii$ at the same time as $N^{-1/3}\ell\to\ii$ to ensure that they meet rarely (and then $\rho=N/\ell^3\to0$). By scaling we can recast everything in a fixed box of side length $1$, leading to the Hamiltonian
$$\ell^{-2}\sum_{j=1}^N -\Delta_{x_j}+\sum_{1\leq j<k\leq N}w\big(\ell(x_j-x_k)\big).$$
This Hamiltonian can be written in a form similar to~\eqref{eq:H_N} if we let $\ell=N$, $w_N(x)=N^3w(Nx)$ and multiply everything by $N^2$: 
$$\sum_{j=1}^N -\Delta_{x_j}+\frac{1}{N}\sum_{1\leq j<k\leq N}w_N(x_j-x_k).$$
This is a mean-field model in which the interaction is scaled with $N$ and satisfies $w_N\wto (\int_{\R^3}w)\delta$. This could suggest that one would end up with the GP equation with interaction $(\int_{\R^3}w)\delta$, which happens to be somewhat wrong. There are short-distance correlations but their sole effect is to renormalize the constant $\int_{\R^3}w$ to $8\pi a$, where $a$ is the scattering length of $w$~\cite{LieYng-98,LieSeiYng-00,LieSeiSolYng-05,LieSei-06,NamRouSei-15}. It is because the Hamiltonian has a form similar to the mean-field model~\eqref{eq:H_N} that some tools developed for the latter can sometimes be useful in the dilute limit.

In an intermediate model where $w_N$ is scaled at a slower rate, e.g. $w_N(x)=N^{3\beta}w(N^\beta x)$ with $0<\beta<1$, the limit will indeed involve the constant $\int_{\R^3}w$~\cite{LewNamRou-14c} instead of $8\pi a$. Here  $\beta = 1/3$ is the dividing line between ``rare but strong" and ``frequent but weak" interactions. The proof for $1/3<\beta<1$ is therefore much more delicate, since the typical interaction length is now much smaller than the average distance between the particles.

% In the physics literature, the dilute case is often explained by first replacing $w$ by $8\pi a\delta$ in the many-particle Hamiltonian and then arguing as for the mean-field model with a state $u^{\otimes N}$. We think that this argument should be avoided as much as possible, since it underestimates the subtle role of the two-particle correlations and thereby increases the confusion between the two regimes.

%%%%%%%%%%%%%%%%%%%%%%%%%%%%%%%%%%%%%%%%%%
\section{Convergence of the ground state energy}\label{sec:energy}
%%%%%%%%%%%%%%%%%%%%%%%%%%%%%%%%%%%%%%%%%%

In this section we are going to explain that the ground state energy is, to leading order, given by states with independent particles as above. To this end, we introduce the many-particle and GP ground state energies
$$E(N)=\inf\sigma_{\bigotimes_s^NL^2(\R^d)}(H_N),\qquad \eH:=\inf_{\int_{\R^d}|u|^2=1}\cE(u).$$
From the variational principle it is clear that $E(N)\leq N\,\eH$.

We need technical assumptions on the potentials $A$, $V$ and $w$ to make everything meaningful. We distinguish two situations
\begin{description}
 \item[$\bullet$ confined case] $h$ is bounded from below and has a compact resolvent, and $|w(x-y)|$ is $h_x+h_y$ form-bounded with relative bound $<1$, on the two-particle space $L^2(\R^d)\otimes_s L^2(\R^d)$;
 \item[$\bullet$ unconfined or locally-confined case] the real-valued functions $|V|$, $|A|^2$ and $w$ are in $L^p(\R^d)+L^\ii_\epsilon(\R^d)$ with $p\geq\max(2,d/2)$ ($p>2$ in dimension $d=4$), hence are $-\Delta$--compact. 
\end{description}
These conditions are not optimal and can be weakened in several ways but we will not discuss this here. The most important is that we make no assumption on the sign of $w$ (nor on its Fourier transform $\widehat{w}$). The interaction can be repulsive, attractive, or both. The following was proved in~\cite{LewNamRou-14}.

\begin{theorem}[Convergence of energy~\cite{LewNamRou-14}]\label{thm:energy}
Under the previous assumptions, we have
$$\boxed{\lim_{N\to\ii}\frac{E(N)}{N}=\eH.}$$
\end{theorem}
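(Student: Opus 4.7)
The inequality $E(N)\leq N\,\eH$ is immediate: given an (approximate) minimizer $u$ of $\cE$, the symmetric product $u^{\otimes N}$ lies in $\bigotimes_s^N L^2(\R^d)$ and the identity $\pscal{u^{\otimes N},H_Nu^{\otimes N}}=N\,\cE(u)$ already recorded in the text gives the bound. The content of the theorem is therefore the matching lower bound $\liminf_{N\to\ii} E(N)/N \geq \eH$.

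\textbf{Lower bound via quantum de Finetti.} I would pick a sequence of approximate ground states $\Psi_N$ with $\pscal{\Psi_N,H_N\Psi_N}=E(N)+o(N)$ and study their $k$-particle reduced density matrices $\gamma_N^{(k)}$. Because $\lambda=1/(N-1)$, one has the clean identity
$$\frac{\pscal{\Psi_N,H_N\Psi_N}}{N}=\tr\bigl(h\,\gamma_N^{(1)}\bigr)+\frac{1}{2}\tr\bigl(w\,\gamma_N^{(2)}\bigr).$$
The form-bound on $w$ combined with the already-known upper bound $E(N)/N\leq \eH$ gives a uniform a priori control on $\tr(h\,\gamma_N^{(1)})$. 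In the confined case, $h$ has compact resolvent, so $(\gamma_N^{(1)})$ is relatively compact in the trace class, and by a diagonal extraction $\gamma_N^{(k)}\to\gamma^{(k)}$ in trace norm for every $k$. Bosonic symmetry together with the strong quantum de Finetti theorem (Hudson--Moody, St\o rmer) then produces a Borel probability measure $\mu$ on the unit sphere of $L^2(\R^d)$ with
$$\gamma^{(k)}=\int |u^{\otimes k}\rangle\langle u^{\otimes k}|\,d\mu(u),\qquad k\geq 1.$$
Passing to the limit (using Fatou on the positive parts of $h$ and $w$, trace-norm continuity where applicable) yields
$$\liminf_{N\to\ii}\frac{E(N)}{N}\geq \int_{\|u\|=1}\cE(u)\,d\mu(u)\geq \eH.$$

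\textbf{Unconfined case and main obstacle.} In the unconfined or locally-confined setting, some mass can escape to infinity, so $\gamma_N^{(1)}$ only converges weakly-$*$ and $\tr\gamma^{(1)}$ may be strictly less than $1$. Here I would replace the strong de Finetti theorem by its weak counterpart (Lewin; Ammari--Nier), which still yields a Borel probability measure $\mu$, now supported on the \emph{closed} unit ball of $L^2(\R^d)$. To absorb the loss of mass, I would perform a geometric localization on bosonic Fock space relative to a large ball $B_R$, splitting the energy into an inside contribution that converges to $\int \cE(u)\,d\mu(u)$ and an outside contribution bounded below by a free many-body energy on $\R^d\setminus B_R$. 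Since $A$, $V$ and $w$ vanish at infinity, the mass $1-\int\|u\|^2\,d\mu(u)$ escaping to infinity contributes nonnegatively in the limit, and the subadditivity of $\eH$ as a function of the mass constraint then closes the inequality. The main obstacle is precisely this mass-loss bookkeeping without any sign assumption on $w$: the map $\gamma^{(2)}\mapsto\tr(w\gamma^{(2)})$ is not a priori weakly lower semicontinuous, and controlling its negative part while mass leaks out requires the full strength of the Fock-space localization machinery that underlies the weak de Finetti theorem.
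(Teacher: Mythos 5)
Your proposal follows essentially the same route as the paper's proof (the de Finetti strategy of Section 4, i.e.\ the argument of \cite{LewNamRou-14}): trial-state upper bound, reduction to the two-particle density matrix, strong quantum de Finetti plus Fatou in the confined case, and weak quantum de Finetti plus geometric localization in the unconfined case. The confined part of your argument is correct as written. (The paper also records a more elementary alternative, valid only when $A\equiv0$ and $V$ is spin-independent, based on the Hoffmann-Ostenhof inequality and a splitting of $w$ into positive- and negative-definite parts; you do not use it, and you do not need it for the general statement.)

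There is, however, one genuine flaw in your treatment of the unconfined case: the escaping particles do \emph{not} in general ``contribute nonnegatively''. No sign assumption is made on $w$, so an escaping cluster can carry strictly negative energy (an attractive interaction can bind particles into a cluster that runs off to infinity, as for gravitating bosons). The correct statement, and the one the paper relies on, is that the energy per particle of the escaped piece is bounded below, asymptotically, by the translation-invariant Gross--Pitaevskii energy at the lost mass, i.e.\ by $e^0\bigl(1-\|u\|^2\bigr)$ where $e^0(\lambda)$ denotes the infimum of $\cE$ with $A=V=0$ over $\int|u|^2=\lambda$; this is exactly where the paper reuses the Section~3 arguments, which apply at infinity because there $h=-\Delta$ is real and positivity-preserving. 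One then closes the proof with the binding-type inequality $\eH\leq \cE(u)+e^0\bigl(1-\|u\|^2\bigr)$ (what you call subadditivity, proved by trial functions with the two pieces sent far apart), integrated against the de Finetti measure $\mu$ on the unit ball. So your skeleton is the right one, but the claimed nonnegativity must be replaced by this lower bound in terms of $e^0$; as stated, your argument would only cover interactions for which the translation-invariant GP energy is nonnegative, which falls short of the generality the theorem asserts.
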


Results similar to Theorem~\ref{thm:energy} have been shown in many particular situations, but Theorem~\ref{thm:energy} is, to our knowledge, the first generic result. 
Previous works dealt with the Lieb-Liniger model~\cite{LieLin-63,SeiYngZag-12}, bosonic atoms~\cite{BenLie-83,Solovej-90,Bach-91,BacLewLieSie-93}, stars~\cite{LieYau-87}, and confined systems~\cite{FanSpoVer-80,VdBLewPul-88,RagWer-89,Werner-92}. 

We remark that, for confined systems, the result is exactly the same at positive temperature:
$$\lim_{N\to\ii}-\frac{1}{\beta N}\log\tr (e^{-\beta H_N})=\eH,$$
see~\cite[Thm. 3.6]{LewNamRou-14}. For bosons the entropy plays no role at the considered scale and it is necessary to take $T$ of order $N$ in order to see a depletion from the ground state energy. The limit is then much more delicate and involves the Gibbs measures associated with the GP nonlinear energy $\cE$~\cite{LewNamRou-15}.

Before discussing the convergence of states, we would like to give a simple proof of Theorem~\ref{thm:energy}, different from~\cite{LewNamRou-14}, that only works when $A\equiv0$ and $V$ is spin-independent. We start with the well-known case of a positive-definite interaction.

\subsubsection*{Proof for non-rotating gases with positive-definite interaction}
The proof of Theorem~\ref{thm:energy} is well known in the simplest situation in which
\begin{itemize}
 \item[(i)] $h$ is real and positive preserving, $\pscal{u,hu}\geq\pscal{|u|,h|u|}$, that is, $A\equiv0$ and $V$ is spin-independent;
 \item[(ii)] $w$ is positive-definite, that is, has a positive Fourier transform $\widehat{w}\geq0$.
\end{itemize}
These two properties can be used through the following two lemmas.
\begin{lemma}[Hoffmann-Ostenhof inequality~\cite{Hof-77}]
If $h$ is real and positive-preserving, then for every bosonic many particle state $\Psi_N$ we have
\begin{equation}
\Big\langle \Psi_N,\sum_{j=1}^Nh_{x_j}\Psi_N\Big\rangle\geq N\pscal{\sqrt{\rho_{\Psi_N}},h\sqrt{\rho_{\Psi_N}}}
\label{eq:Hoffmann-Ostenhof}
\end{equation}
with the one-particle density
$$\rho_{\Psi_N}(x)=\int_{\R^d}\cdots\int_{\R^d}|\Psi_N(x,x_2,...,x_N)|^2\,dx_2\cdots dx_N.$$
\end{lemma}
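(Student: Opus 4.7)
The approach is the classical one (essentially the argument of Hoffmann-Ostenhof, reflecting a diamagnetic-type inequality in the absence of a magnetic field): reduce the many-body kinetic energy to a one-body expression in terms of $\sqrt{\rho_{\Psi_N}}$ by combining the positivity-preserving hypothesis, bosonic symmetry, and a pointwise Cauchy--Schwarz estimate.

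First, I would use the hypothesis $\pscal{u,hu}\geq \pscal{|u|,h|u|}$ to reduce to the case $\Psi_N\geq 0$. This hypothesis lifts to the $N$-particle operator $\sum_{j=1}^N h_{x_j}$: applied to $\Psi_N$ seen as a function of $x_j$ with the other variables held fixed, it gives $\int|\nabla_{x_j}\Psi_N|^2\geq \int|\nabla_{x_j}|\Psi_N||^2$ pointwise in the remaining variables, while the multiplicative potential part satisfies $\pscal{\Psi_N,V(x_j)\Psi_N}=\pscal{|\Psi_N|,V(x_j)|\Psi_N|}$ trivially. Since $\rho_{\Psi_N}$ depends only on $|\Psi_N|$, I may assume $\Psi_N\geq 0$ is real.

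Next, by bosonic symmetry, $\pscal{\Psi_N,h_{x_j}\Psi_N}$ is independent of $j$, so
$$\Big\langle\Psi_N,\sum_{j=1}^N h_{x_j}\Psi_N\Big\rangle = N\pscal{\Psi_N,h_{x_1}\Psi_N}.$$
Splitting $h=-\Delta+V$, the potential part is exact by Fubini: $\pscal{\Psi_N,V(x_1)\Psi_N}=\int_{\R^d} V\,\rho_{\Psi_N}=\pscal{\sqrt{\rho_{\Psi_N}},V\sqrt{\rho_{\Psi_N}}}$. The crux is the kinetic term. Differentiating under the integral,
$$\nabla_x\rho_{\Psi_N}(x)=2\int \Psi_N(x,x_2,\ldots,x_N)\,\nabla_{x_1}\Psi_N(x,x_2,\ldots,x_N)\,dx_2\cdots dx_N,$$
and Cauchy--Schwarz in $(x_2,\ldots,x_N)$ gives the pointwise bound
$$|\nabla_x\rho_{\Psi_N}(x)|^2 \leq 4\rho_{\Psi_N}(x)\int|\nabla_{x_1}\Psi_N(x,x_2,\ldots,x_N)|^2\,dx_2\cdots dx_N.$$
Dividing by $4\rho_{\Psi_N}$ on $\{\rho_{\Psi_N}>0\}$ and integrating in $x$ yields $\int|\nabla\sqrt{\rho_{\Psi_N}}|^2\leq \pscal{\Psi_N,-\Delta_{x_1}\Psi_N}$. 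Adding the potential contribution and multiplying by $N$ produces \eqref{eq:Hoffmann-Ostenhof}.

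I do not expect any serious obstacle; even the step where one passes from $\pscal{u,hu}\geq\pscal{|u|,h|u|}$ on the one-body space to the $N$-body analogue is routine once $-\Delta$ and $V$ are separated. The only mild technical point is giving a meaning to $\sqrt{\rho_{\Psi_N}}$ and $\nabla\sqrt{\rho_{\Psi_N}}$ on the set $\{\rho_{\Psi_N}=0\}$, which is handled in the standard way by first replacing $\rho_{\Psi_N}$ with $\rho_{\Psi_N}+\varepsilon$, running the same Cauchy--Schwarz argument, and letting $\varepsilon\to 0$.
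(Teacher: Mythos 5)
Your argument is correct, but it is not the route the paper takes. The paper's proof is purely abstract: it expands the one-particle density matrix in natural orbitals $u_k$ with occupation numbers $n_k$, writes $\big\langle\Psi_N,\sum_j h_{x_j}\Psi_N\big\rangle=N\sum_k n_k\pscal{u_k,hu_k}$, applies the hypothesis $\pscal{u,hu}\geq\pscal{|u|,h|u|}$ to each orbital, and then combines the orbitals pairwise via the identity $\pscal{u_1,hu_1}+\pscal{u_2,hu_2}=\pscal{u_1+iu_2,h(u_1+iu_2)}\geq\pscal{\sqrt{u_1^2+u_2^2},h\sqrt{u_1^2+u_2^2}}$ for real $u_1,u_2$, iterated to reach $\sqrt{\sum_k n_k|u_k|^2}=\sqrt{\rho_{\Psi_N}}$. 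You instead work directly on the wavefunction: bosonic symmetry reduces to $N\pscal{\Psi_N,h_{x_1}\Psi_N}$, you split $h=-\Delta+V$, treat the potential term exactly, and control the kinetic term by the pointwise Cauchy--Schwarz bound $|\nabla\rho_{\Psi_N}|^2\leq 4\rho_{\Psi_N}\int|\nabla_{x_1}\Psi_N|^2\,dx_2\cdots dx_N$, i.e.\ the original Hoffmann-Ostenhof computation. The trade-off: your proof is more elementary and self-contained (no natural-orbital decomposition, no inductive passage to an infinite sum of orbitals, and the $\rho+\varepsilon$ regularization handles the zero set of the density cleanly), but it uses the concrete Schr\"odinger form $h=-\Delta+V$ and the diamagnetic fact $|\nabla|f||\leq|\nabla f|$, whereas the paper's argument only uses the abstract quadratic-form hypothesis $\pscal{u,hu}\geq\pscal{|u|,h|u|}$ and therefore applies verbatim to any real positivity-preserving $h$ (for instance nonlocal kinetic energies or operators given only through their forms), which is the level of generality at which the lemma is stated. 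Both proofs establish \eqref{eq:Hoffmann-Ostenhof} as a quadratic-form inequality, which is all that is used later.
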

\begin{proof}
Written in a basis of natural orbitals $u_k$ and occupation numbers $n_k$, the proof is just the remark that 
\begin{align*}
 \Big\langle \Psi_N,\sum_{j=1}^Nh_{x_j}\Psi_N\Big\rangle =N\sum_{k\geq1}n_k\pscal{u_k,hu_k}&\geq N\sum_{k\geq1}n_k\pscal{|u_k|,h|u_k|}\\
 &\geq N\pscal{\sqrt{\sum_{k\geq1}n_k|u_k|^2},h\sqrt{\sum_{k\geq1}n_k|u_k|^2}}
\end{align*}
where in the last line we have inductively applied the inequality 
$$\pscal{u_1,hu_1}+\pscal{u_2,hu_2}=\pscal{u_1+iu_2,h(u_1+iu_2)}\geq \pscal{\sqrt{u_1^2+u_2^2},h\sqrt{u_1^2+u_2^2}}$$ for real functions $u_1,u_2$.
\end{proof}

\begin{lemma}[Estimating the two-body interaction by a one-body term]
If $\widehat{w}\geq0$ is in $L^1(\R^d)$, then for all $\eta\in L^1(\R^d)$
\begin{multline}
\sum_{1\leq j<k\leq N}w(x_j-x_k)\\ \geq \sum_{j=1}^N\eta\ast w(x_j)-\frac12\int_{\R^d}\int_{\R^d}w(x-y)\eta(x)\eta(y)\,dx\,dy-\frac{N}{2}w(0).
\label{eq:lower_bound_w}
\end{multline}
\end{lemma}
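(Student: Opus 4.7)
The plan is a textbook completion-of-squares argument in Fourier space, made possible by the positivity $\widehat w\ge 0$.

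First I would symmetrize, writing
\begin{equation*}
\sum_{1\le j<k\le N}w(x_j-x_k)=\frac12\sum_{j,k=1}^N w(x_j-x_k)-\frac{N}{2}w(0),
\end{equation*}
which already isolates the $-\tfrac{N}{2}w(0)$ term in the desired bound. The remaining double sum is then the object I need to estimate from below by a one-body term plus a constant.

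Second, I would use the hypothesis $\widehat w\in L^1$ with $\widehat w\ge 0$ to represent $w$ by its inverse Fourier transform. Up to a $(2\pi)^{d/2}$ normalization, this gives
\begin{equation*}
\sum_{j,k=1}^N w(x_j-x_k)=(2\pi)^{-d/2}\int_{\R^d}\widehat w(k)\,\Big|\sum_{j=1}^N e^{ik\cdot x_j}\Big|^2\,\mathrm{d}k.
\end{equation*}
The integrand is pointwise nonnegative because $\widehat w\ge 0$, which is exactly what will license dropping a nonnegative square in the next step.

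Third comes the key completion of squares. For every $k$ and every complex number $Z_k$,
\begin{equation*}
\Big|\sum_{j=1}^N e^{ik\cdot x_j}-Z_k\Big|^2\ge 0\;\Longrightarrow\;\Big|\sum_{j=1}^N e^{ik\cdot x_j}\Big|^2\ge 2\,\mathrm{Re}\Big(\overline{Z_k}\sum_{j=1}^N e^{ik\cdot x_j}\Big)-|Z_k|^2.
\end{equation*}
Choosing $Z_k$ to be a suitable multiple of $\widehat\eta(k)$ and integrating against $\widehat w(k)\ge 0$, I identify the two resulting terms by Plancherel and the convolution theorem:
\begin{equation*}
(2\pi)^{-d/2}\int\widehat w(k)\,\overline{\widehat\eta(k)}\sum_{j=1}^N e^{ik\cdot x_j}\,\mathrm{d}k=\sum_{j=1}^N(\eta\ast w)(x_j),
\end{equation*}
using that $\eta$ and $w$ are real with $w$ even, and
\begin{equation*}
(2\pi)^{-d/2}\int\widehat w(k)|\widehat\eta(k)|^2\,\mathrm{d}k=\int_{\R^d}\int_{\R^d}w(x-y)\eta(x)\eta(y)\,\mathrm{d}x\,\mathrm{d}y.
\end{equation*}
Plugging these back into the pointwise bound, dividing by $2$, and combining with the symmetrization step produces exactly the inequality claimed.

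The only real obstacle is bookkeeping: fixing a consistent Fourier convention and tracking the constant $(2\pi)^{d/2}$ so that Plancherel matches the normalization in the representation of $w$. Apart from that, the argument is mechanical, and the integrability hypotheses $\widehat w,\eta\in L^1$ suffice to make every integral absolutely convergent. Note that no assumption on the sign of $w$ itself is needed — only the positivity of $\widehat w$ — which is precisely the feature used when the lemma is later combined with the Hoffmann-Ostenhof bound.
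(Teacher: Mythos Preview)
Your argument is correct and is essentially the same as the paper's. The paper condenses your three steps into one line by setting $f=\sum_{j=1}^N\delta_{x_j}-\eta$ and observing that $\iint w(x-y)f(x)f(y)\,dx\,dy=(2\pi)^{d/2}\int\widehat w(k)|\widehat f(k)|^2\,dk\ge 0$; your completion of squares $|\sum_j e^{ik\cdot x_j}-Z_k|^2\ge 0$ with $Z_k$ proportional to $\widehat\eta(k)$ is exactly the statement $|\widehat f(k)|^2\ge 0$, and expanding the double integral reproduces your symmetrization and Plancherel identifications.
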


\begin{proof}
With $f=\sum_{j=1}^N\delta_{x_j}-\eta$, expand $\int_{\R^{d}}\int_{\R^{d}}w(x-y)f(x)f(y)\,dx\,dy=(2\pi)^{d/2}\int_{\R^d}\widehat{w}(k)|\widehat{f}(k)|^2\,dk\geq0$. 
\end{proof}

Taking $\eta=N\rho_{\Psi_N}$ and using the Hoffmann-Ostenhof inequality~\eqref{eq:Hoffmann-Ostenhof}, we obtain the lower bound
\begin{equation}
\pscal{\Psi_N,H_N\Psi_N}\geq N\cE(\sqrt{\rho_{\Psi_N}})-\frac{N\,w(0)}{2(N-1)}\geq N\,\eH-\frac{N\,w(0)}{2(N-1)}. 
 \label{eq:lower_bound_GP_density}
\end{equation}
Minimizing over $\Psi_N$ and recalling the upper bound $E(N)\leq N\,\eH$ proves the final estimate
$$\eH-\frac{w(0)}{2(N-1)}\leq\frac{E(N)}{N}\leq \eH,$$
which clearly ends the proof of Theorem~\ref{thm:energy} when $0\leq \widehat{w}\in L^1(\R^d)$. If $\widehat{w}$ is positive but not integrable (e.g. for Coulomb), the proof can be done by an approximation argument.\qed

\subsubsection*{Proof for non-rotating gases with an arbitrary interaction}
We now sketch an unpublished proof of Theorem~\ref{thm:energy} for an arbitrary interaction $w$, still under the assumption (i) that $h$ is real and positive-preserving. 
The idea, inspired by~\cite{LevyLeblond-69,LieYau-87}, is to use auxiliary classical particles that repel each other, in order to model the attractive part of the interaction. 

For simplicity we consider $2N$ particles that we split in two groups of $N$. The positions of the $N$ first will be denoted by $x_1,...,x_N$ whereas those of the others will be denoted by $y_1=x_{N+1},...,y_N=x_{2N}$. Of course, the separation is completely artificial and in reality the $2N$ particles are indistinguishable. Next we pick a $2N$-particle state $\Psi_{2N}$ and use its bosonic symmetry in the $2N$ variables to rewrite
$$\frac1{2N}\Big\langle \Psi_{2N},\sum_{j=1}^{2N}h_{x_j}\Psi_{2N}\Big\rangle=\frac1{N}\Big\langle \Psi_{2N},\sum_{j=1}^{N}h_{x_j}\Psi_{2N}\Big\rangle.$$
In a similar fashion, we decompose $w=w_1-w_2$ where $\widehat{w_1}=(\widehat{w})_+\geq0$ and $\widehat{w_2}=(\widehat{w})_-\geq0$ and write the repulsive part using only the $x_j$'s
\begin{multline*}
\frac1{2N(2N-1)}\Big\langle \Psi_{2N},\sum_{1\leq j<k\leq 2N}w_1(x_j-x_k)\Psi_{2N}\Big\rangle\\=\frac1{N(N-1)}\Big\langle \Psi_{2N},\sum_{1\leq j<k\leq N}w_1(x_j-x_k)\Psi_{2N}\Big\rangle.
\end{multline*}
On the other hand, we express the attractive part as the difference of two terms, involving respectively only the $y_\ell$'s and both species: 
\begin{multline*}
-\frac1{2N(2N-1)}\Big\langle \Psi_{2N},\sum_{1\leq j<k\leq 2N}w_2(x_j-x_k)\Psi_{2N}\Big\rangle\\=\frac1{N(N-1)}\Big\langle \Psi_{2N},\sum_{1\leq \ell<m\leq N}w_2(y_\ell-y_m)\Psi_{2N}\Big\rangle\\ - \frac1{N^2}\Big\langle \Psi_{2N},\sum_{j=1}^N\sum_{\ell=1}^Nw_2(x_j-y_\ell)\Psi_{2N}\Big\rangle.
\end{multline*}
This means that $\pscal{\Psi_{2N},H_{2N}\Psi_{2N}}/2N= \langle\Psi_{2N},\tilde H_N\Psi_{2N}\rangle/N$
with
\begin{multline*}
\tilde H_N=\sum_{j=1}^{N}h_{x_j}+\frac{1}{N-1}\sum_{1\leq j<k\leq 2N}w_1(x_j-x_k)+\frac{1}{N-1}\sum_{1\leq \ell<m\leq N}w_2(y_\ell-y_m)\\-\frac1N \sum_{j=1}^N\sum_{\ell=1}^Nw_2(x_j-y_\ell).
\end{multline*}
This Hamiltonian describes a system of $N$ quantum particles that repel through the potential $w_1/(N-1)$ and $N$ classical particles that repel through $w_2/(N-1)$, with an attraction $-w_2/N$ between the two species.

In order to bound $\tilde H_N$ from below, we first fix the positions $y_1,...,y_N$ of the particles in the second group and consider $\tilde H_N$ as an operator acting only over the $x_j$'s. Let $\Phi_N$ be any bosonic $N$-particle state in the $N$ first variables. Using~\eqref{eq:Hoffmann-Ostenhof} and~\eqref{eq:lower_bound_w} for the repulsive potential $w_1$ as in the previous proof, we obtain
\begin{multline*}
\frac{\pscal{\Phi_N,\tilde H_N\Phi_N}}N\geq \pscal{\sqrt{\rho_{\Phi_N}},h\sqrt{\rho_{\Phi_N}}}+\frac12 \int_{\R^d}\int_{\R^d}\rho_{\Phi_N}(x)\rho_{\Phi_N}(y)w_1(x-y)\,dx\,dy\\
-\frac{w_1(0)}{2(N-1)}+\frac{1}{N(N-1)}\sum_{1\leq \ell<m\leq N}w_2(y_\ell-y_m)
-\frac1N\sum_{\ell=1}^N \rho_{\Phi_N}\ast w_2(y_\ell).
\end{multline*}
Next we use again~\eqref{eq:lower_bound_w} for $w_2$ with $\eta=(N-1)\rho_{\Phi_N}$ and obtain
\begin{multline*}
\sum_{1\leq \ell<m\leq N}w_2(y_\ell-y_m)-(N-1)\sum_{\ell=1}^N \rho_{\Phi_N}\ast w_2(y_\ell)\\
\geq -\frac{(N-1)^2}{2}\int_{\R^d}\int_{\R^d}\rho_{\Phi_N}(x)\rho_{\Phi_N}(y)w_2(x-y)\,dx\,dy -\frac{Nw_2(0)}{2}.
\end{multline*}
Therefore, we have shown that
$$\frac{\pscal{\Phi_N,\tilde H_N\Phi_N}}N\geq \cE(\sqrt{\rho_{\Phi_N}})-\frac{w_1(0)+w_2(0)}{2(N-1)}\geq \eH-\frac{w_1(0)+w_2(0)}{2(N-1)}.$$
Since the right side is independent of the $y_\ell$'s, the bound 
$$\frac{\tilde H_N}N\geq \eH-\frac{w_1(0)+w_2(0)}{2(N-1)}$$
holds in the sense of operators in the $2N$-particle space.
Minimizing over $\Psi_{2N}$ gives our final estimate
$$\eH-\frac{w_1(0)+w_2(0)}{2(N-1)}\leq \frac{E(2N)}{2N}\leq \eH.$$
We have considered an even number of particles for simplicity, but the proof works the same if we split the system into two groups of $N$ and $N+1$ particles. Another possibility is to use that $N\mapsto E(N)/N$ is non-decreasing, which gives the final estimate 
$$\boxed{\eH-\frac{w_1(0)+w_2(0)}{N-3}\leq \frac{E(N)}{N}\leq \eH}$$
for $N\geq 4$. Note that $w_1(0)+w_2(0)=(2\pi)^{-d/2}\int_{\R^d}|\widehat{w}|$. Non-integrable $\widehat{w}$ can be handled using an approximation argument.\qed

\medskip

The previous two proofs rely deeply on the Hoffmann-Ostenhof inequality, which allows to use the square root of the density as a trial function for the GP energy $\cE$ in a lower bound. This argument can only be used when the GP minimizers are positive, that is, when there are no vortices.\footnote{There could be several GP minimizers for an attractive potential $w$, but they will all be positive when $h$ satisfies (i).} When $h$ does not satisfy the property (i), a completely different approach is necessary. 

We remark that the method applies to the translation-invariant case $A,V\equiv0$, since then $h=-\Delta$ satisfies the Hoffmann-Ostenhof inequality. As we will explain in the next section, this will be useful to deal with the particles that might escape the system, when the potentials $A$ and $V$ vanish at infinity. 

%%%%%%%%%%%%%%%%%%%%%%%%%%%%%%%%%%%%%%%%%%
\section{Convergence of states and quantum de Finetti theorems}
%%%%%%%%%%%%%%%%%%%%%%%%%%%%%%%%%%%%%%%%%%

We discuss here the link between the many-particle ground states $\Psi_N$ and the minimizers of the GP energy $\cE$, solving the GP equation~\eqref{eq:GP2}. We also indicate how to prove Theorem~\ref{thm:energy} for a general $h$. For a pedagogical presentation of these results from~\cite{LewNamRou-14,LewNamRou-15b}, we also refer to~\cite{Rougerie-15}.

As we will explain in Section~\ref{sec:Bogoliubov}, we cannot expect that $\Psi_N$ will be close in norm to a factorized state $u^{\otimes N}$. Instead, we use its $k$-particle density matrix, whose integral kernel is defined by
$$\Gamma_{\Psi_N}^{(k)}(x_1,...,x_k,y_1,...,y_k):=\int_{\R^{d(N-k)}}\Psi_N(x_1,...,x_k,Z)\overline{\Psi(y_1,...,y_k,Z)}\,dZ.$$
It is normalized to $\tr\Gamma_{\Psi_N}^{(k)}=1$, on the contrary to the usual convention: here macroscopically occupied states have an occupation number of order one. Remark that for a factorized state
$\Gamma_{u^{\otimes N}}^{(k)}=|u^{\otimes k}\rangle\langle u^{\otimes k}|.$
Our goal is, therefore, to prove that $\Gamma_{\Psi_N}^{(k)}$ converges to a density matrix in this form, with $u$ minimizing the GP energy. However, in the case of non-unique GP minimizers, several of them could be occupied in the limit, which is usually called \emph{fragmented Bose-Einstein condensation}. The best we can hope is that we obtain a convex combination of all the possible GP minimizers.

\begin{theorem}[Convergence of states~\cite{LewNamRou-14}]\label{thm:states}
Under the previous assumptions on $h$ and $w$, let $\Psi_N$ be any sequence of $N$-particle states such that 
$$\pscal{\Psi_N,H_N\Psi_N}=E(N)+o(N).$$
In the \textbf{confined case}, there exists a subsequence and a probability measure $\mu$ on the set 
$$\cM=\{\text{minimizers for $\eH$}\}$$ 
such that 
\begin{equation}
 \Gamma_{\Psi_{N_j}}^{(k)}\to \int_{\cM}|u^{\otimes k}\rangle\langle u^{\otimes k}|\,d\mu(u) 
 \label{eq:CV_PDM}
\end{equation}
strongly in the trace-class as $N_j\to\ii$, for all $k\geq1$. 
In the \textbf{unconfined or locally-confined case} the result is the same except that some particles could escape to infinity, hence we have to take
$$\cM=\{\text{weak limits of minimizing sequences for $\eH$}\}$$
and the limit in~\eqref{eq:CV_PDM} \emph{a priori} only holds weakly-$\ast$ in the trace-class.
\end{theorem}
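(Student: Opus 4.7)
The plan is to combine a weak-$*$ compactness argument with a bosonic quantum de Finetti theorem to identify the limiting reduced density matrices, and then to use the energy convergence from Theorem~\ref{thm:energy} to force the representing measure to concentrate on $\cM$.

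Since $\tr\Gamma_{\Psi_N}^{(k)}=1$, Banach--Alaoglu combined with a diagonal extraction yields a subsequence along which $\Gamma_{\Psi_{N_j}}^{(k)}\wto \gamma^{(k)}$ weakly-$*$ in the trace class for every $k\geq 1$, with the compatible hierarchy $\tr_{k+1}\gamma^{(k+1)}=\gamma^{(k)}$. I would then apply a bosonic quantum de Finetti theorem to obtain a probability measure $\mu$ on the unit sphere (confined case) or the closed unit ball (unconfined case) of $L^2(\R^d)$ with
$$\gamma^{(k)}=\int |u^{\otimes k}\rangle\langle u^{\otimes k}|\,d\mu(u).$$
In the confined case, the compact resolvent of $h$ together with the uniform energy bound gives tightness of $\Gamma_{\Psi_N}^{(1)}$, hence trace preservation $\tr\gamma^{(k)}=1$; this upgrades weak-$*$ to trace-norm convergence and allows the use of the classical strong de Finetti theorem. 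In the unconfined case one must allow $\tr\gamma^{(k)}<1$, corresponding to mass escaping to infinity, and invoke the weak de Finetti theorem.

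To identify $\supp\mu$, I expand the energy in terms of reduced density matrices. Using $\lambda=1/(N-1)$,
$$\frac{\pscal{\Psi_N,H_N\Psi_N}}{N}=\tr\bigl(h\,\Gamma_{\Psi_N}^{(1)}\bigr)+\frac12\tr\bigl(w(x-y)\,\Gamma_{\Psi_N}^{(2)}\bigr).$$
Passing to the $\liminf$, using lower semicontinuity together with the de Finetti representation, and invoking Theorem~\ref{thm:energy} to evaluate the left-hand side as $\eH$, one gets
$$\eH\geq \int \cE(u)\,d\mu(u)+(\text{energy at infinity}).$$
In the confined case the second term vanishes and every $u\in\supp\mu$ must satisfy $\|u\|=1$ and $\cE(u)=\eH$, which is exactly the statement that $\mu$ is supported on $\cM$. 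In the unconfined case the ``energy at infinity'' correction must be controlled by geometrically localizing $\Psi_N$ into a part living on a bounded set and an escaping part; the escaping part is then handled by the translation-invariant case of Theorem~\ref{thm:energy}, which is available since $A,V$ vanish at infinity and the Hoffmann--Ostenhof-based proof from the previous section applies.

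The main obstacle is this geometric localization in the unconfined case: it must preserve enough structure for the de Finetti representation to apply to the bounded part while ensuring the escaping part carries at least $\eH$ times its mass, so that no cross terms spoil the equality. A secondary difficulty is passing to the limit in $\tr(w(x-y)\,\Gamma_{\Psi_N}^{(2)})$ when $w$ is only form-bounded, which requires a truncation/approximation argument combined with the uniform form bound built into the assumptions on $h$ and $w$.
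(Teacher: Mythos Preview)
Your proposal is essentially the paper's own argument: weak-$\ast$ compactness of the density matrices, quantum de Finetti to represent the limits, and the energy convergence of Theorem~\ref{thm:energy} to force $\mu$ onto $\cM$, with geometric localization plus the translation-invariant estimate handling the escaping mass in the unconfined case.

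Two small corrections. First, the consistency relations $\tr_{k+1}\gamma^{(k+1)}=\gamma^{(k)}$ do \emph{not} in general survive weak-$\ast$ limits---the paper makes a point of this---so you cannot assert them right after the diagonal extraction. In the confined case you correctly recover strong convergence (hence consistency) from tightness; in the unconfined case the weak de Finetti theorem (Theorem~\ref{thm:wqdF}) applies directly to the weak-$\ast$ limits of the $\Gamma^{(k)}_{\Psi_N}$ and does not require consistency, so your conclusion is unaffected. Second, the paper writes the energy as $\tfrac12\tr\big(H_2\,\Gamma^{(2)}_{\Psi_N}\big)$ rather than splitting into one- and two-body pieces; since $H_2$ is bounded below while $w$ alone need not be, this makes the Fatou-type lower semicontinuity step clean and avoids the truncation issue you flag at the end.
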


The probability measure $\mu$ describes the fragmented Bose-Einstein condensation. For instance, if there are only two GP minimizers, then $\mu$ will give their relative occupations. The simplest case is when $\cM=\{u_0\}$, where there will always be complete Bose-Einstein condensation on $u_0$. In general we should probably think of $\mu$ as a probability over experiments, where only one GP minimizer $u$ is usually observed at a time. We remark that the result holds for any sequence $\Psi_N$ that provides the right ground state energy to leading order. It is indeed possible to construct sequences $\Psi_N$ that yield any probability $\mu$ on $\cM$ in the limit. If we take for $\Psi_N$ an exact ground state of $H_N$, then we might end up with a definite $\mu$, a question that is not addressed by the theorem.

In the unconfined or locally-confined case, the theorem only gives Bose-Einstein condensation for the particles that stay (due to the weak limits in the statement). All the information about the particles that escape to infinity is lost. In principle, all the particles could even escape, in which case $\cM=\{0\}$ and the result is essentially empty.\footnote{Nevertheless, one can apply a translation that follows one cluster, if it exists, and Theorem~\ref{thm:states} applied in this new reference frame gives Bose-Einstein condensation in the cluster.} Dealing with the possibility that some particles could fly apart is a delicate mathematical question which occupies a large part of~\cite{LewNamRou-14} and is our main contribution. The confined case was essentially known before~\cite{FanSpoVer-80,VdBLewPul-88,RagWer-89,Werner-92,LieSei-06}.

\medskip

The main tool for proving Theorems~\ref{thm:energy} and~\ref{thm:states} is the \emph{quantum de Finetti theorem}. The latter is an abstract result which says that, at the level of density matrices, only factorized states remain in the limit $N\to\ii$, for any sequence of bosonic states. This result is similar to the classical de Finetti-Hewitt-Savage theorem~\cite{DeFinetti-31,DeFinetti-37,Dynkin-53,HewSav-55,DiaFre-80} which states that the law of any sequence of exchangeable random variables is always a convex combination of laws of iid variables. The quantum analogue reads as follows.

\begin{theorem}[quantum de Finetti~\cite{Stormer-69,HudMoo-75}]\label{thm:qdF}
Let $\{\Gamma^{(k)}\}_{k\geq0}$ be an infinite sequence of bosonic $k$-particle density matrices on an arbitrary one-particle Hilbert space $\gH$, satisfying the consistency relations
\begin{equation}
\Gamma^{(k)}\geq0,\qquad \tr_{k+1}\Gamma^{(k+1)}=\Gamma^{(k)},\qquad \Gamma^{(0)}=1.
\label{eq:compatibility}
\end{equation}
Then there exists a probability measure $\mu$ on the unit sphere $S\gH=\{u\in \gH,\ \|u\|=1\}$, invariant under multiplication by phase factors, such that 
$$\Gamma^{(k)}=\int_{S\gH}|u^{\otimes k}\rangle\langle u^{\otimes k}|\,d\mu(u),\qquad\text{for all $k\geq1$.}$$
\end{theorem}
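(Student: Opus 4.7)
My plan is to deduce the infinite-dimensional statement from a \emph{quantitative finite-dimensional} version of the theorem and then pass to the limit in the dimension by a compactness argument. This route avoids any $C^*$-algebraic machinery and gives explicit error bounds along the way.

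The core of the argument is the finite-dimensional quantitative de Finetti estimate of Chiribella and Christandl--K\"onig--Mitchison--Renner: if $\gK \subset \gH$ has finite dimension $D$ and $\Gamma^{(N)}$ is a bosonic state on $\gK^{\otimes_s N}$ with $k$-particle reductions $\Gamma^{(k)}$, then there is an explicit Husimi-type probability measure $\mu_N$ on the unit sphere $S\gK$ such that
\begin{equation*}
\Big\|\Gamma^{(k)} - \int_{S\gK} |u^{\otimes k}\rangle\langle u^{\otimes k}|\,d\mu_N(u)\Big\|_1 \leq \frac{C k D}{N}.
\end{equation*}
This rests on the resolution of identity
\begin{equation*}
\mathds 1_{\gK^{\otimes_s N}} = \binom{N+D-1}{N}\int_{S\gK}|u^{\otimes N}\rangle\langle u^{\otimes N}|\,du,
\end{equation*}
where $du$ is the uniform probability measure on $S\gK$. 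Plugging this resolution into the partial-trace formula $\Gamma^{(k)}=\mathrm{Tr}_{k+1,\dots,N}\Gamma^{(N)}$ and comparing the combinatorial prefactor $\binom{N+D-1}{N}\big/\binom{N-k+D-1}{N-k}$ with $1$ produces the stated $kD/N$ error.

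Armed with this, I would first restrict attention to the separable subspace of $\gH$ spanned by the ranges of all $\Gamma^{(k)}$, and choose an increasing sequence $(P_n)$ of finite-rank projections with $P_n \to \mathds 1_\gH$ strongly. The consistency hypothesis~\eqref{eq:compatibility} lets me regard the truncated hierarchy as coming from bosonic $N$-particle states on $P_n\gH$ for \emph{arbitrarily large} $N$; applying the quantitative estimate with $N\to\infty$ and extracting a weak limit on the compact sphere of $P_n\gH$ (by Prokhorov) produces, for each $n$, a probability measure $\mu_n$ satisfying $P_n^{\otimes k}\Gamma^{(k)}P_n^{\otimes k} = \int |u^{\otimes k}\rangle\langle u^{\otimes k}|\,d\mu_n(u)$ for every $k$. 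I would then view the $\mu_n$ as Borel measures on the closed unit ball $\overline{B_\gH}$ endowed with the weak topology (compact and metrizable by separability), perform a diagonal Prokhorov extraction to obtain a limit measure $\mu$, and use $P_n\to\mathds 1$ strongly to pass to the limit in the displayed identity for each $k$.

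The main obstacle is to ensure that $\mu$ is supported on the unit \emph{sphere} and not merely on the closed ball, since weak limits of unit vectors may have strictly smaller norm. The rescue is the normalization $\tr \Gamma^{(k)}=1$, which translates into $\int_{\overline{B_\gH}}\|u\|^{2k}\,d\mu(u)=1$ for every $k$; letting $k\to\infty$ by dominated convergence forces $\mu$ to give zero mass to $\{\|u\|<1\}$. Phase invariance of $\mu$ is automatic since $|u^{\otimes k}\rangle\langle u^{\otimes k}|$ depends only on $u$ modulo phase, so one can symmetrize the final measure under the $U(1)$-action if needed.
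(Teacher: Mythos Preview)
Your overall strategy --- reduce to the quantitative finite-dimensional estimate and pass to the limit through finite-rank projections $P_n\uparrow\1$ --- is exactly the route the paper sketches for Theorem~\ref{thm:wqdF} (from which Theorem~\ref{thm:qdF} follows once the trace is conserved). But there is a real gap at the projection step.

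You write that the consistency relations let you ``regard the truncated hierarchy as coming from bosonic $N$-particle states on $P_n\gH$ for arbitrarily large $N$''. This is not true as stated: the compressed operators $P_n^{\otimes k}\Gamma^{(k)}P_n^{\otimes k}$ do \emph{not} form a consistent hierarchy on $P_n\gH$, because compression and partial trace do not commute. One only has
\[
\tr_{k+1}\big(P_n^{\otimes(k+1)}\Gamma^{(k+1)}P_n^{\otimes(k+1)}\big)\;\leq\; P_n^{\otimes k}\Gamma^{(k)}P_n^{\otimes k},
\]
with strict inequality in general. Equivalently, if you instead compress $\Gamma^{(N)}$ to $(P_n\gH)^{\otimes_s N}$, normalize, and apply the Chiribella--CKMR bound, the $k$-body marginal you control is $\tr_{k+1,\dots,N}\big(P_n^{\otimes N}\Gamma^{(N)}P_n^{\otimes N}\big)$, which need not approach $P_n^{\otimes k}\Gamma^{(k)}P_n^{\otimes k}$ as $N\to\infty$; for $\Gamma^{(N)}=|v^{\otimes N}\rangle\langle v^{\otimes N}|$ with $v\notin P_n\gH$ the compressed trace decays like $\|P_nv\|^{2N}\to0$. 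So the asserted identity $P_n^{\otimes k}\Gamma^{(k)}P_n^{\otimes k}=\int|u^{\otimes k}\rangle\langle u^{\otimes k}|\,d\mu_n(u)$ does not follow from your argument.

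The missing ingredient is precisely the \emph{geometric localization} of~\cite{Lewin-11} that the paper invokes: lift $\Gamma^{(N)}$ to Fock space, use the factorization $\cF(\gH)\simeq\cF(P_n\gH)\otimes\cF((1-P_n)\gH)$, and trace out the second factor. This produces a genuine state on $\cF(P_n\gH)$ --- spread over several particle-number sectors --- whose density matrices \emph{do} encode $P_n^{\otimes k}\Gamma^{(k)}P_n^{\otimes k}$, and to which the finite-dimensional estimate can be applied sector by sector. The price is that the resulting Husimi measures naturally live on the unit \emph{ball} of $P_n\gH$, which is why the paper first lands on Theorem~\ref{thm:wqdF}. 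Your final paragraph, using $\int\|u\|^{2k}\,d\mu=1$ for every $k$ to force $\mu$ onto $S\gH$, is correct and is exactly how one recovers Theorem~\ref{thm:qdF} from the weak statement; but the localization machinery is needed to get there.
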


The result says that the infinite hierarchies of (appropriately normalized) bosonic density matrices can only be convex combinations of factorized states. In other words, the only extreme points of this convex set are the factorized states. This important theorem was already the basic tool used in the proof for confined systems in~\cite{FanSpoVer-80,VdBLewPul-88,RagWer-89,Werner-92}. It is very popular in quantum information theory~\cite{KonRen-05,FanVan-06,LevCer-09,ChrTon-09,RenCir-09,Chiribella-11,BraHar-12}.

The quantum de Finetti theorem makes the proof of Theorems~\ref{thm:energy} and~\ref{thm:states} very simple for confined systems. The main observation is that the energy can be written in terms of the two-particle density matrix as follows:
$$\frac{\pscal{\Psi_N,H_N\Psi_N}}{N}=\frac{1}{2}\tr \big(H_2\Gamma^{(2)}_{\Psi_N}\big).$$
However, the idea is to use the $\Gamma^{(k)}_{\Psi_N}$ for all $k\geq1$, to infer some information on the structure of their limits.
Since the density matrices are all normalized in the trace-class, hence are bounded, we may extract subsequences and assume that 
$\Gamma^{(k)}_{\Psi_{N_j}}\wto_\ast \Gamma^{(k)}$
weakly-$\ast$, for some $\Gamma^{(k)}$ and all $k\geq1$. For confined systems $H_2$ has a compact resolvent by assumption and the energy bounds can be used to show that no particle can escape to infinity. Hence $\Gamma^{(k)}_{\Psi_{N_j}}\to \Gamma^{(k)}$ strongly in the trace-class. This strong convergence implies that the limits $\Gamma^{(k)}$ satisfy the consistency relations~\eqref{eq:compatibility}. Using then Fatou's lemma and the quantum de Finetti Theorem~\ref{thm:qdF} for $\Gamma^{(2)}$, we infer that
\begin{align*}
\liminf_{N\to\ii}\frac{\pscal{\Psi_N,H_N\Psi_N}}{N}&=\liminf_{N\to\ii}\frac{1}{2}\tr \big(H_2\Gamma^{(2)}_{\Psi_N}\big)\\
&\geq \frac{1}{2}\tr \big(H_2\Gamma^{(2)}\big)
=\int_{SL^2(\R^d)}\frac{\pscal{u^{\otimes 2},H_2u^{\otimes 2}}}2\,d\mu(u)\\
&=\int_{SL^2(\R^d)}\cE(u)\,d\mu(u)\geq \eH\int_{SL^2(\R^d)}\,d\mu(u)=\eH.
\end{align*}
The upper bound implies that there is equality everywhere, hence that $E(N)/N\to\eH$ and that $\mu$ is supported on the set of minimizers for $\eH$. This ends the proof of Theorems~\ref{thm:energy} and~\ref{thm:states} for confined systems.

For unconfined systems, this simple argument breaks down, since the consistency relations~\eqref{eq:compatibility} do not pass to weak-$\ast$ limits in general. The method used in~\cite{LewNamRou-14} was to treat separately the particles that stay in a neighborhood of 0 (for which the quantum de Finetti theorem is valid) and those that escape. All the possible cases of $K$ particles escaping and $N-K$ staying, with $K$ of the order of $N$ have to be considered. These different events are handled using a method introduced in~\cite{Lewin-11}. For the particles that escape, our argument relies on the knowledge that their energy per particle converges to the translation-invariant GP energy. Since the potentials $A$ and $V$ vanish at infinity, the resulting one-particle operator $h=-\Delta$ is now real and positive-preserving and the arguments of Section~\ref{sec:energy} apply. In~\cite[Sec.~4.3]{LewNamRou-14} we designed a more involved proof that can even deal with arbitrary translation-invariant systems.

Our method is general and could be used in other situations. A byproduct of our approach is a generalization of the quantum de Finetti theorem to the case of weak convergence of density matrices.

\begin{theorem}[weak quantum de Finetti~\cite{LewNamRou-14}]\label{thm:wqdF}
Let $\Gamma_N$ be a sequence of bosonic $N$-particle state on $\bigotimes_s^N\gH$, and assume that the corresponding density matrices $\Gamma_N^{(k)}\wto_\ast \Gamma^{(k)}$ weakly-$\ast$ in the trace-class, for every $k\geq1$. 
Then there exists a probability density $\mu$ on the unit ball $B\gH=\{u\in \gH,\ \|u\|\leq 1\}$, invariant under multiplication by phase factors, such that 
$$\Gamma^{(k)}=\int_{B\gH}|u^{\otimes k}\rangle\langle u^{\otimes k}|\,d\mu(u),\qquad\text{for all $k\geq1$.}$$
\end{theorem}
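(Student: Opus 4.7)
The plan is to reduce the weak-$*$ case to the strong quantum de Finetti theorem (Theorem~\ref{thm:qdF}) via geometric localization. Fix a finite-rank orthogonal projection $P$ on $\gH$ and consider the truncations $P^{\otimes k}\Gamma_N^{(k)}P^{\otimes k}$. Since $P^{\otimes k}$ is compact, the assumed weak-$*$ convergence in the trace class upgrades to strong trace-norm convergence:
$$P^{\otimes k}\Gamma_N^{(k)}P^{\otimes k}\longrightarrow P^{\otimes k}\Gamma^{(k)}P^{\otimes k}\qquad \text{as }N\to\ii.$$
These truncations can be interpreted as the $k$-particle density matrices of the $P$-localized state associated to $\Gamma_N$, which lives on the bosonic Fock space $\bigoplus_{n=0}^N\bigotimes_s^n (P\gH)$. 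The mass defect $1-\tr(P^{\otimes k}\Gamma_N^{(k)}P^{\otimes k})$ precisely records the fraction of particles pushed into the complement $P^\perp\gH$ under localization; this is the whole point of introducing $P$.

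Next I would apply a Fock-space version of Theorem~\ref{thm:qdF} to the localized limiting state $\Gamma^P:=\{P^{\otimes k}\Gamma^{(k)}P^{\otimes k}\}_{k\geq 1}$. The strong convergence just established guarantees that $\Gamma^P$ is a genuine state on $\cF(P\gH)$, and that its reduced densities satisfy the Fock-space analogue of the consistency relations~\eqref{eq:compatibility} (in particular $\tr_{k+1}\Gamma^{P,(k+1)}\leq \Gamma^{P,(k)}$, with equality when $P=\mathds{1}$). As $P\gH$ is finite-dimensional, the de Finetti theorem in this setting produces a phase-invariant probability measure $\mu_P$ on the closed unit ball $B_P:=\{u\in P\gH:\|u\|\leq 1\}$ with
$$P^{\otimes k}\Gamma^{(k)}P^{\otimes k}=\int_{B_P}|u^{\otimes k}\rangle\langle u^{\otimes k}|\,d\mu_P(u),\qquad k\geq 1.$$
The appearance of the ball rather than the sphere arises directly from summing the contributions of the different particle numbers in the localized Fock-space state, with $\|u\|^2$ encoding how much mass is retained.

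The final step is to stitch the family $\{\mu_P\}$ into a single phase-invariant Radon probability measure $\mu$ on $B\gH$, topologized by the weak topology (which is compact and metrizable when $\gH$ is separable). Using the uniqueness of the de Finetti measure in finite dimension, one checks that $\{\mu_P\}$ is a consistent projective family under the pushforwards induced by $u\mapsto Pu$ as $P$ ranges over the net of finite-rank projections. A Kolmogorov/Prokhorov extension argument on this cylindrical system then yields the desired $\mu$. Testing $\Gamma^{(k)}$ against compact operators of the form $P^{\otimes k}AP^{\otimes k}$ (a dense set in the compact operators as $P\nearrow \mathds{1}_\gH$ strongly) gives the claimed representation $\Gamma^{(k)}=\int_{B\gH}|u^{\otimes k}\rangle\langle u^{\otimes k}|\,d\mu(u)$.

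The main obstacle is this last projective-limit step. One must verify true consistency of $\{\mu_P\}$, i.e.\ rule out spurious mass created by truncation, and handle the topology on $B\gH$ carefully enough that Kolmogorov extension produces an honest Radon measure rather than a merely finitely additive content. The conceptual content of the proof also explains why the ball $B\gH$ has to replace the sphere $S\gH$ of Theorem~\ref{thm:qdF}: vectors $u$ with $\|u\|<1$ encode the portion of each ``condensed'' mode that has been lost to infinity in the weak-$*$ limit, which is exactly the phenomenon the weak theorem is designed to capture.
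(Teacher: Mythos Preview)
Your approach is essentially the same as the one the paper sketches: localize to a finite-dimensional subspace $P\gH$ via geometric localization, invoke a finite-dimensional de Finetti result there (the paper specifically uses the quantitative version~\eqref{eq:quantitative_dF} with the explicit Husimi measure rather than an abstract ``Fock-space version'' of Theorem~\ref{thm:qdF}, but this is a minor technical variant), and then remove $P$ in the limit, with the unit ball arising precisely from the mass lost under localization. Your identification of the projective-limit step as the main technical point is accurate, and the paper likewise defers the details to~\cite{LewNamRou-14}.
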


That the final measure lives over the ball $B\gH$ instead of the unit sphere $S\gH$ is not surprising in the case of weak limits. It is important that it is always a probability measure. If all the particles are lost in the system, then $\mu$ is a Dirac delta at $u=0$. In the case of strong convergence, the limit $\Gamma^{(k)}$ must have a trace one, hence the measure $\mu$ has to live over the unit sphere $S\gH$ and one recovers Theorem~\ref{thm:qdF}. Theorem~\ref{thm:wqdF} has been shown to be a very useful tool to study the time-dependent BBGKY hierarchy~\cite{CheHaiPavSei-15} and to simplify arguments for the dilute limit~\cite{NamRouSei-15}. 

A result similar to Theorem~\ref{thm:wqdF}, stated in Fock space, appeared before in works~\cite{AmmNie-08,AmmNie-11} by Ammari and Nier. These authors developed a semi-classical theory in infinite dimension, calling $\mu$ a \emph{Wigner measure}, and this is really what the quantum de Finetti is about. 
The link with semi-classical analysis is better understood in a finite-dimensional space, $\dim(\gH)<\ii$. Indeed, in the bosonic $N$-particle space one has the \emph{resolution of the identity} in terms of factorized states
\begin{equation}
 \1_{\bigotimes_s^N\gH}=c_N \int_{S\gH}|u^{\otimes N}\rangle\langle u^{\otimes N}|\,du,\quad \text{with } c_N={{N+\dim(\gH)-1}\choose{\dim(\gH)-1}},
\label{eq:Schur}
 \end{equation}
which is similar to a coherent state representation. The formula follows from Schur's lemma, since the right side commutes with all the unitaries $U^{\otimes N}$ and the set spanned by these operators has a trivial commutant.

The formula~\eqref{eq:Schur} implies that the factorized states $u^{\otimes N}$ span the whole bosonic space $\bigotimes_s^N\gH$, but this is of course not an orthonormal basis. However, in the limit $N\to\ii$ its elements become more and more orthogonal since $\pscal{u^{\otimes N},v^{\otimes N}}=\pscal{u,v}^N\to0$ when $u$ is not parallel to $v$. This is similar to coherent states in the limit $\hbar\to0$.

Inspired by semi-classical analysis, it is then natural to compare an $N$-body state with the one built using the \emph{Husimi measure} associated with the coherent state representation~\eqref{eq:Schur}. This leads to the following quantitative version of the quantum de Finetti theorem, in a finite-dimensional space.

\begin{theorem}[Quantitative quantum de Finetti in finite-dimension~\cite{LewNamRou-15b}]
Let $\Gamma_N$ be any bosonic state over $\bigotimes_s^N\gH$, with $\dim\gH<\ii$, and define the corresponding Husimi probability measure by $d\mu_N(u)=c_N\langle u^{\otimes N},\Gamma_N u^{\otimes N}\rangle\,du$ on $S\gH$. Then, for every $1\leq k\leq N/(2\dim(\gH))$, we have in trace norm 
\begin{equation}
\norm{\Gamma_N^{(k)}-\int_{S\gH}|u^{\otimes k}\rangle\langle u^{\otimes k}|\,d\mu_N(u)}_1\leq \frac{2k\dim\gH}{N-k\dim\gH}.
\label{eq:quantitative_dF}
\end{equation}
\end{theorem}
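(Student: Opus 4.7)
The plan is to compute $\widetilde\Gamma_N^{(k)}:=\int_{S\gH}|u^{\otimes k}\rangle\langle u^{\otimes k}|\,d\mu_N(u)$ as an explicit partial trace in $\Gamma_N$, by using the coherent state resolution~\eqref{eq:Schur} one level up, and then to bound the trace-norm distance to $\Gamma_N^{(k)}$. Writing $d:=\dim\gH$, the starting observation is the factorization $|u^{\otimes(N+k)}\rangle = |u^{\otimes k}\rangle\otimes |u^{\otimes N}\rangle$ under the canonical identification $\bigotimes^{N+k}\gH\simeq \bigotimes^{k}\gH\otimes \bigotimes^{N}\gH$, together with the trace identity $\langle u^{\otimes N},\Gamma_N u^{\otimes N}\rangle = \tr\bigl[|u^{\otimes N}\rangle\langle u^{\otimes N}|\,\Gamma_N\bigr]$. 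Inserting these into the definition of $d\mu_N$ and then invoking~\eqref{eq:Schur} on $\bigotimes_s^{N+k}\gH$ to integrate the joint $(N{+}k)$-fold coherent projector, one reaches the Chiribella-type ``lifting'' identity
$$\widetilde\Gamma_N^{(k)}\;=\;\frac{c_N}{c_{N+k}}\,\tr_{k+1,\ldots,N+k}\!\Bigl[P_s^{N+k}\bigl(\1^{\otimes k}\otimes\Gamma_N\bigr)\Bigr],$$
where $P_s^{N+k}$ denotes the projector onto $\bigotimes_s^{N+k}\gH$.

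Next, I would expand $P_s^{N+k}=\frac{1}{(N+k)!}\sum_{\sigma\in S_{N+k}}U_\sigma$ and classify each $\sigma$ by the number $j\in\{0,\ldots,k\}$ of ``auxiliary'' slots (originally carrying $\1$) that it transports into the traced-out block of $N$ positions. For each $\sigma$ the partial trace contributes a factor $d$ for every surviving identity, and a marginal $\Gamma_N^{(k-j)}$ on the remaining slots. Using the bosonic symmetry of $\Gamma_N$ to aggregate equivalent permutation classes, I expect to obtain an exact convex decomposition
$$\widetilde\Gamma_N^{(k)}\;=\;\sum_{j=0}^{k}\beta_{N,k,j}\,M_{N,k,j},$$
with explicit binomial-ratio coefficients $\beta_{N,k,j}\geq0$ summing to $1$, where each $M_{N,k,j}$ is the bosonic $k$-particle state obtained from $P_s^{k}(\Gamma_N^{(k-j)}\otimes \1^{\otimes j})P_s^{k}$ after normalization. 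In particular $M_{N,k,0}=\Gamma_N^{(k)}$, since $\Gamma_N^{(k)}$ is already $P_s^{k}$-symmetric.

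Finally, because every $M_{N,k,j}$ and $\Gamma_N^{(k)}$ are states of trace norm one, the triangle inequality produces
$$\|\widetilde\Gamma_N^{(k)}-\Gamma_N^{(k)}\|_1\;\leq\;\sum_{j=1}^{k}\beta_{N,k,j}\|M_{N,k,j}-\Gamma_N^{(k)}\|_1\;\leq\;2\bigl(1-\beta_{N,k,0}\bigr).$$
The combinatorial computation of Step 2, combined with the explicit identity $c_N=\binom{N+d-1}{d-1}$ and the ratio $c_N/c_{N+k}=\prod_{i=1}^{k}(N+i)/(N+d-1+i)$, yields a closed form for $\beta_{N,k,0}$ as a binomial ratio which an elementary estimate bounds below by $1-kd/(N-kd)$ precisely when $k\leq N/(2d)$ (this is exactly what makes the denominator positive). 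The advertised bound $2kd/(N-kd)$ follows. The main obstacle is Step 2: cleanly executing the combinatorial bookkeeping, identifying the $\beta_{N,k,j}$ as the correct binomial ratios, and verifying the normalizations of the $M_{N,k,j}$, so that Step 3 closes with the sharp constant stated in the theorem; once the decomposition is in place, everything else is a direct binomial computation.
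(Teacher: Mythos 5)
Your Step 1 is correct: the lifting identity
\begin{equation*}
\int_{S\gH}|u^{\otimes k}\rangle\langle u^{\otimes k}|\,d\mu_N(u)\;=\;\frac{c_N}{c_{N+k}}\,\tr_{k+1,\dots,N+k}\Bigl[P_s^{N+k}\bigl(\1^{\otimes k}\otimes\Gamma_N\bigr)\Bigr]
\end{equation*}
follows exactly as you say from $|u^{\otimes(N+k)}\rangle=|u^{\otimes k}\rangle\otimes|u^{\otimes N}\rangle$ and Schur's lemma, and the permutation bookkeeping of Step 2 does produce a decomposition of the announced type, namely (with $d=\dim\gH$)
\begin{equation*}
\int_{S\gH}|u^{\otimes k}\rangle\langle u^{\otimes k}|\,d\mu_N(u)\;=\;\frac{c_N}{c_{N+k}\binom{N+k}{k}}\sum_{\ell=0}^{k}\binom{k}{\ell}\binom{N}{\ell}\,P_s^k\bigl(\Gamma_N^{(\ell)}\otimes\1^{\otimes(k-\ell)}\bigr)P_s^k ,
\end{equation*}
up to a harmless indexing slip in your sketch: with your convention the pure term $\Gamma_N^{(k)}$ is the one where \emph{all} auxiliary slots are traced out, not $j=0$ (for $k=1$ this reads $\tilde\Gamma^{(1)}=\frac{N\Gamma_N^{(1)}+\1}{N+d}$, so the ``$j=0$'' permutations give $\1/d$, not $\Gamma^{(1)}_N$).

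The genuine gap is Step 3. The weight of the good term can be computed exactly: $\beta_{\rm good}=\frac{c_N}{c_{N+k}}\cdot\frac{\binom{N}{k}}{\binom{N+k}{k}}=\prod_{i=1}^{k}\frac{N-k+i}{N+d-1+i}$, so that $1-\beta_{\rm good}=\frac{k(k+d-1)}{N}+O(N^{-2})$, whereas the theorem requires $\frac{kd}{N-kd}=\frac{kd}{N}+O(N^{-2})$. For every $k\geq 2$ one has $k(k+d-1)>kd$, so the crude bound $2(1-\beta_{\rm good})$ obtained by discarding the lower-order terms is strictly worse than the stated bound inside the allowed range $k\leq N/(2d)$; concretely, for $d=2$, $k=2$, $N=100$ one finds $2(1-\beta_{\rm good})=2\bigl(1-\tfrac{99\cdot100}{102\cdot103}\bigr)\approx 0.115$ while $\frac{2kd}{N-kd}=\frac{8}{96}\approx 0.083$. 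Hence the ``elementary estimate'' you defer to at the end does not exist: the triangle inequality applied to this decomposition replaces $d$ by roughly $d+k$, and it cannot be repaired by a sharper evaluation of $\beta_{\rm good}$, since I computed it exactly. (Also, the hypothesis $k\leq N/(2d)$ is not what makes $N-kd>0$; that only needs $k<N/d$, and at $k=N/(2d)$ the stated bound is the trivial value $2$.) To reach the constant $2kd/(N-kd)$ one must keep the lower-order terms and exploit their overlap with $\Gamma_N^{(k)}$; the proofs in~\cite{ChrKonMitRen-07,LewNamRou-15b} do this by writing $\Gamma_N^{(k)}=c_{N-k}\int_{S\gH}\langle u^{\otimes(N-k)}|\Gamma_N|u^{\otimes(N-k)}\rangle\,du$ (a partial inner product, positive as an operator on the $k$-particle space) and comparing this integrand directly, for each $u$, with the corresponding Husimi integrand $c_N\langle u^{\otimes N},\Gamma_Nu^{\otimes N}\rangle\,|u^{\otimes k}\rangle\langle u^{\otimes k}|$, rather than estimating the two integrals separately as you propose.
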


This theorem is a slight improvement of a similar result in~\cite{ChrKonMitRen-07}. Related estimates have also appeared in~\cite{KonRen-05,FanVan-06,LevCer-09,ChrTon-09,RenCir-09,Chiribella-11,BraHar-12}.
In~\cite{LewNamRou-14} we prove the weak quantum de Finetti Theorem~\ref{thm:wqdF} by localizing the particles to a finite-dimensional space, where we could use~\eqref{eq:quantitative_dF}. Only after we have taken the limit $N\to\ii$ we removed the finite-dimensional localization. The fact that the measure might in the end live over the unit ball comes from the geometric localization procedure~\cite{Lewin-11}.

%%%%%%%%%%%%%%%%%%%%%%%%%%%%%%%%%%%%%%%%%%
\section{The Bogoliubov excitation spectrum}\label{sec:Bogoliubov}
%%%%%%%%%%%%%%%%%%%%%%%%%%%%%%%%%%%%%%%%%%

We now turn to the next order in the expansion of $H_N$, that is, the excitation spectrum and Bogoliubov's theory. From now on we assume that the GP energy has a unique ground state and that no particle escape, that is, 
$\cM=\{u_0\}$ with $\int_{\R^d}|u_0|^2=1.$
Of course, $u_0$ solves the GP equation which we rewrite as
$$h_0\,u_0=0,\qquad\text{with}\quad h_0:=h+|u_0|^2\ast w-\epsilon_0.$$
We also need the property that 
\begin{equation}
\int_{\R^d}\int_{\R^d}w(x-y)^2|u_0(x)|^2|u_0(y)|^2\,dx\,dy<\ii.
\label{eq:condition_K}
\end{equation}
This is not automatic since we have made no assumption on $w^2$, but this is true in most physical situations.
In order to be able to design a perturbation argument in a neighborhood of $u_0$, we also assume that the Hessian of $\cE$ is non-degenerate at this point. The Hessian is
\begin{align}
\frac12{\rm Hess}\;\cE_{\rm H}(u_0)(v,v)&=\pscal{v,h_0v}+\frac12\int_{\Omega}\int_{\Omega} w(x-y)u_0(x)u_0(y)\Big(\overline{v(x)}v(y)\nonumber\\
&\qquad\qquad+v(x)\overline{v(y)}+\overline{v(x)}\overline{v(y)}+ v(x)v(y)\Big)\,dx\,dy\nonumber\\
&=\frac12\pscal{\begin{pmatrix}v\\ \overline{v}\end{pmatrix},\begin{pmatrix}
h_0+K_1&K_2^*\\ K_2&\overline{h_0+K_1}\end{pmatrix}\begin{pmatrix}v\\ \overline{v}\end{pmatrix}}
\label{eq:Hessian}
\end{align}
for every $v$ orthogonal to $u_0$. Here $K_1$ and $K_2$ are the restrictions to $\{u_0\}^\perp\otimes_s\{u_0\}^\perp$ of the operators with kernel $w(x-y)u_0(x)\overline{u_0(y)}$ and $w(x-y)u_0(x)u_0(y)$, respectively. These are Hilbert-Schmidt (hence bounded) operators under the assumption~\eqref{eq:condition_K}. The condition that $u_0$ is non-degenerate means that 
\begin{equation}
\pscal{\begin{pmatrix}v\\ \overline{v}\end{pmatrix},\begin{pmatrix}
h_0+K_1&K_2^*\\ K_2&\overline{h_0+K_1}\end{pmatrix}\begin{pmatrix}v\\ \overline{v}\end{pmatrix}}\geq \eta\norm{v}^2
\label{eq:condition_Hess}
\end{equation}
for some $\eta>0$ and all $v\in\{u_0\}^\perp$. By~\cite[Appendix~A]{LewNamSerSol-15}, the condition~\eqref{eq:condition_Hess} is exactly what is needed to make sure that the second-quantized operator
\begin{multline}
\bH_{0}=\int a^\dagger(x)\big(h_0a\big)(x)\,dx+\int\int u_0(x)\overline{u_0(y)}w(x-y)a^\dagger(x)a(y)\,dx\,dy\\
+\frac{1}2\int\int u_0(x)u_0(y)w(x-y)a^\dagger(x)a^\dagger(y)\,dx\,dy+c.c. 
\label{eq:Bogoliubov2}
\end{multline}
is a well-defined bounded-below Hamiltonian on the Fock space with the mode $u_0$ removed,
$$\cF_+=\C\oplus\{u_0\}^\perp\oplus\bigoplus_{n\geq2}\bigotimes_s^n\{u_0\}^\perp.$$

Our goal is to prove that $\bH_0$ furnishes the excitation spectrum of the mean-field Hamiltonian $H_N$, that is, the excited states above $E(N)$. However, the two problems are posed in different Hilbert spaces. In~\cite{LewNamSerSol-15}, we have introduced a method to describe excitations of the condensate $u_0$, that makes the occurrence of $\cF_+$ very natural. The starting point is the remark that any $\Psi_N$ of the bosonic $N$-particle space can be written in the form
\begin{equation}
\Psi_N:=\phi_0\, u_0^{\otimes N}+ u_0^{\otimes (N-1)}\otimes_s\phi_1 + u_0^{\otimes (N-2)}\otimes_s\phi_2+\cdots + \phi_N
\label{eq:excitations}
\end{equation}
where $\phi_0\in\C$, $\phi_n\in\bigotimes_s^n\{u_0\}^\perp$ and $u_0$ is our GP minimizer (but so far it could be any fixed reference function). 
% One way to prove~\eqref{eq:excitations} is to take an orthonormal basis $\{u_j\}_{j\geq0}$ of $L^2(\R^d)$ that has $u_0$ as first function and to notice that $\Psi_N$ is a linear combination of the $u_{j_1}\otimes_s\cdots\otimes_s u_{j_N}$. Grouping the terms depending on the number of times $u_0$ appears in each product gives~\eqref{eq:excitations}. 
Furthermore, it is clear that the terms in~\eqref{eq:excitations} are all orthogonal, hence
$\norm{\Psi_N}^2=\sum_{n=0}^N\norm{\phi_n}^2.$
In other words, there is a natural isometry 
$$\begin{array}{cccl}
U_N: & \bigotimes_s^NL^2(\R^d) & \to & \displaystyle\cF_+^{\leq N}=\C\oplus\{u_0\}^\perp\oplus\bigoplus_{n=2}^N\bigotimes_s^n\{u_0\}^\perp\\[0.3cm]
 & \Psi_N & \mapsto & \phi_0\oplus \phi_1 \oplus\cdots \oplus \phi_N
  \end{array}
$$
from the $N$-particle space onto the truncated Fock space $\cF_+^{\leq N}$, which is itself a subspace of the full Fock space $\cF_+$. The unitary $U_N$ is adapted to the description of the fluctuations around $u_0$ and it plays the same role as the Weyl unitary for coherent states. After applying the unitary $U_N$ (which does not change the spectrum of $H_N$), we can settle the eigenvalue problem for $H_N$ in the truncated Fock space $\cF_+^{\leq N}$. In the limit $N\to\ii$, we obtain this way a problem posed on the Fock space $\cF_+$, involving the Bogoliubov Hamiltonian.

\begin{theorem}[Validity of Bogoliubov's theory~\cite{LewNamSerSol-15}]\label{thm:Bogoliubov}
Under the previous assumptions, we have the following results.

\smallskip

\noindent $(i)$ \emph{(Weak convergence to $\bH_0$)}. With $U_N$ the previous unitary, we have 
\begin{equation}
U_N\big(H_N-N\,\eH\big)U_N^*\wto \bH_0\quad\text{weakly.}
\label{eq:weak-CV}
\end{equation}

\smallskip

\noindent $(ii)$ \emph{(Convergence of the excitation spectrum).} We have the convergence
\begin{equation}
\lim_{N\to \infty}\big( \lambda_j(H_N)- N\eH\big) =\lambda_j(\mathbb{H}_0)
\label{eq:CV_excitation_spectrum}
\end{equation}
for every fixed $j$. Here $\lambda_j$ denotes the $j$th eigenvalue counted with multiplicity, or the bottom of the essential spectrum in case there are less than $j$ eigenvalue below.

\smallskip

\noindent $(iii)$ \emph{(Convergence of the ground state)}. The lowest eigenvalue of $\bH_0$ is always simple, with corresponding ground state 
$\Phi=\{\phi_n\}_{n\geq0}$ in $\cF_+$ (defined up to a phase factor). Hence the lowest 
eigenvalue of $H_N$ is also simple for $N$ large enough, with ground state $\Psi_N$ and a uniform spectral gap. Furthermore (with a correct choice of phase for $\Psi_N$), 
\begin{equation}
U_N \Psi_N \to \Phi
\label{eq:CV_gd_state}
\end{equation}
strongly in the Fock space $\cF_+$. In the $N$-particle space, this means 
\begin{equation}
\lim_{N\to \infty} \norm{\Psi_N-\sum_{n=0}^N(u_0)^{\otimes N-n}\otimes_s\phi_n}=0.
\label{eq:CV_gd_state2}
\end{equation}

\smallskip

\noindent $(iv)$ \emph{(Convergence of excited states)}. If $\lambda_j(\mathbb{H})$ is below the bottom of the essential spectrum of $\bH_0$, then we have a similar convergence result for the corresponding eigenvectors, up to subsequences in case of degeneracies.
\end{theorem}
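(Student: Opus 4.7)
The plan is to conjugate $H_N$ by the unitary $U_N$ so that the entire analysis takes place on (a subspace of) the excitation Fock space $\cF_+$, and then to compare with $\bH_0$ through upper bounds (via trial states) and lower bounds (via a priori estimates on the number of excitations $\cN_+$). The algebraic starting point is the identity $U_N a^\dagger(u_0) U_N^* = \sqrt{N-\cN_+}$ on $\cF_+^{\leq N}$, together with $U_N a_+^\sharp U_N^* = a_+^\sharp$ for modes orthogonal to $u_0$. Splitting every field operator in $H_N$ into its $u_0$-component and its perpendicular component and normal-ordering, a direct calculation yields
$$U_N\big(H_N - N\eH\big) U_N^* = \bH_0\big|_{\cF_+^{\leq N}} + \frac{R_N}{N},$$
where $R_N$ is a polynomial of degree at most four in $a_+^\sharp$ (with coefficients involving the bounded operators $K_1,K_2$ and factors $\sqrt{1-\cN_+/N}$), hence dominated in operator norm by $C(\cN_++1)^2$ on the spectral subspace $\{\cN_+\leq \Lambda\}$. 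This immediately gives the weak convergence~\eqref{eq:weak-CV} of part $(i)$ on the dense core of Fock vectors with finitely many excitations.

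For part $(ii)$ I would argue by a two-sided min-max. The upper bound comes from trial states: given a near-eigenvector $\Phi^{(j)}\in\cF_+$ of $\bH_0$ below the essential spectrum, truncate it to $\cF_+^{\leq N}$ (with $o(1)$ error since the symplectic diagonalization of $\bH_0$ shows that its low-lying eigenvectors have all moments of $\cN_+$ finite) and apply $U_N^*$ to produce an $N$-body trial vector whose energy converges to $N\eH+\lambda_j(\bH_0)$ thanks to the expansion above. The lower bound rests on the a priori estimate that any near-minimizer $\Psi_N^{(j)}$ of $H_N$ in the subspace orthogonal to the first $j-1$ eigenvectors satisfies
$$\pscal{\Psi_N^{(j)}, \cN_+^m \Psi_N^{(j)}}\leq C_m,\qquad m\geq 1.$$
The case $m=1$ is obtained by combining the coercivity~\eqref{eq:condition_Hess} (which, after symplectic diagonalization of the block matrix, gives $\bH_0 \geq \eta\,\cN_+ - C$ on $\cF_+$) with the expansion above and the control on the one-particle density matrix provided by Theorem~\ref{thm:states}. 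Higher moments follow by induction on $m$, using commutator estimates of $\bH_0$ with $\cN_+^m$. With these a priori bounds, $R_N/N$ is $o(1)$ on the sequences of interest, and weak Fock-space limits of $U_N\Psi_N^{(j)}$ yield an orthonormal family whose $\bH_0$-energies are dominated by $\liminf_N(\lambda_j(H_N)-N\eH)$, so that min-max concludes.

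Part $(iii)$ then follows quickly: $\bH_0$ is a bounded-below quadratic Hamiltonian on $\cF_+$, diagonalized by a Bogoliubov transformation, so its ground state $\Phi$ is the unique quasi-free vacuum of the rotated annihilation operators (up to a phase); the uniform spectral gap for $H_N$ is inherited from $(ii)$, and the strong convergence $U_N\Psi_N\to \Phi$ is upgraded from weak subsequential convergence by conservation of norms. Part $(iv)$ is entirely analogous, taking subsequences within each degenerate eigenspace. The main obstacle is the moment bound $\pscal{\cN_+^m}_{\Psi_N^{(j)}}\leq C_m$ for excited states: orthogonality to the previously found eigenvectors must be transported through $U_N$ and the truncation of $\cF_+$, which in practice I would implement via an IMS-type localization on $\cN_+$ splitting Fock space into few-excitation and many-excitation regimes, the former controlled by the explicit expansion above and the latter suppressed by the coercivity of $\bH_0$ on $\{\cN_+\geq \Lambda\}$.
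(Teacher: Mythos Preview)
The paper itself does not prove Theorem~\ref{thm:Bogoliubov}: this is a proceedings/survey article, and the theorem is stated with a citation to~\cite{LewNamSerSol-15} where the full proof appears. Your outline is essentially the strategy of that reference --- conjugation by $U_N$, algebraic expansion via $U_N a^\dagger(u_0)U_N^*=\sqrt{N-\cN_+}$, two-sided min--max, a priori moment bounds on $\cN_+$, and an IMS-type localization in $\cN_+$ --- so there is nothing to compare against in the present paper.

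One imprecision worth flagging: the expansion is not literally of the form $\bH_0+R_N/N$ with $R_N$ a fixed degree-four polynomial in $a_+^\sharp$. The cubic terms enter with a prefactor $N^{-1/2}$ (not $N^{-1}$), and the replacements $\sqrt{N-\cN_+}\to\sqrt{N}$ generate additional errors controlled by $\cN_+/N$ rather than $1/N$. On $\{\cN_+\leq\Lambda\}$ this still gives $o(1)$, but the bookkeeping in~\cite{LewNamSerSol-15} keeps these scales separate, and the a priori bound on $\pscal{\cN_+}$ (your ``$m=1$'' step) is obtained there directly from the operator expansion and the non-degeneracy~\eqref{eq:condition_Hess}, without invoking Theorem~\ref{thm:states}. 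Apart from these details, your plan is the right one.
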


Bogoliubov's theory has been investigated in many works, including completely integrable 1D systems~\cite{Girardeau-60,LieLin-63,Lieb-63,CalMar-69,Calogero-71,Sutherland-71}, the ground state energy of one and two-component Bose gases~\cite{LieSol-01,LieSol-04,Solovej-06}, the Lee-Huang-Yang formula of dilute gases~\cite{LieYng-98, ErdSchYau-08,GiuSei-09,YauYin-09} and the weakly imperfect Bose gas in a series of works reviewed in~\cite{ZagBru-01}. Our result was stimulated by~\cite{Seiringer-11,GreSei-13} which were the first to address the question in the mean-field model. Our work was then generalized in~\cite{NamSei-15} to cover non-degenerate local minima. The thermodynamic problem was investigated in~\cite{CorDerZin-09,DerNap-14}.

In addition to providing the expected convergence~\eqref{eq:CV_excitation_spectrum} of the excitation spectrum, Bogoliubov's theory also predicts the exact behavior, in norm, of the $N$-particle wavefunctions $\Psi_N$. We said that $\Psi_N$ is in general not close to $u_0^{\otimes N}$, since a finite number of the particles can be excited outside of the condensate without changing the energy to leading order. For eigenvectors of $H_N$, our result~\eqref{eq:CV_gd_state2} gives the exact form of these excitations, which are given by the components $\phi_n\in\bigotimes_s^n\{u_0\}^\perp$ of the corresponding Bogoliubov eigenvector in Fock space.

%%%%%%%%%%%%%%%%%%%%%%%%%%%%%%%%%%%%%%%%%%
\section{The time-dependent mean-field model}
%%%%%%%%%%%%%%%%%%%%%%%%%%%%%%%%%%%%%%%%%%

Everything we have discussed so far dealt with the ground state or the low-lying excited states. In this section we make some remarks on the time-dependent problem. The idea is to \emph{assume} that the system is condensed in a state $u_0$ with some excitations $(\phi_{n,0})$ at the initial time, and to prove that it stays close to a state of this form, with a dynamic condensate function $u(t)$ solving the time-dependent GP equation and excitations $\{\phi_n(t)\}$ evolving with Bogoliubov's equation.

More precisely, we give ourselves an arbitrary $u_0\in H^1(\R^d)$ with $\int_{\R^d}|u_0|^2=1$ describing the condensate and a sequence of functions $\phi_{n,0}\in \bigotimes_s^n\{u_0\}^\perp$ such that $\sum_{n=0}^\ii \int_{\R^{dn}}|\phi_{n,0}|^2=1$. If $\Psi_N(t)$ is the solution of the many-body Schr\"odinger equation
\begin{equation}
 \label{eq:Schrodinger-dynamics}
\left\{ \begin{gathered}
 i\, \dot{\Psi}_{N}(t) = H_N \Psi_{N}(t),  \hfill \\
  \Psi_{N}(0)=\sum_{n=0}^N u_0^{\otimes (N-n)}\otimes_s\phi_{n,0},\hfill \\ 
\end{gathered}  \right.
\end{equation}
then we have proved in~\cite{LewNamSch-15} that for all times $t\ge 0$, 
\begin{equation}
\lim_{N\to\ii}\norm{\Psi_N(t)-\sum_{n=0}^N u(t)^{\otimes (N-n)}\otimes_s\phi_n(t)}=0,
\label{eq:fluctuations}
\end{equation}
where $u(t)$ solves the time-dependent nonlinear GP equation
\begin{equation}
\left\{ \begin{gathered}
 i\, \dot u(t) =  \big(-\Delta +|u(t)|^2*w -\epsilon(t)\big) u(t),  \hfill \\
  u(0)=u_{0},\hfill \\ 
\end{gathered}  \right.
\end{equation}
with $\epsilon(t):=\frac12\iint_{\R^d\times\R^d}|u(t,x)|^2|u(t,y)|^2w(x-y)\,dx\,dy$, and where $\Phi(t):=\{\phi_n(t)\}_{n\geq0}$ solves the linear Bogoliubov equation in Fock space 
\begin{equation}
 \label{eq:Bogoliubov-dynamics}
\left
\{ \begin{gathered}
  i\, \dot{\Phi}(t) = \bH(t)  \Phi(t), \hfill \\
  \Phi(0)=(\phi_{n,0})_{n\geq0}. \hfill \\ 
\end{gathered}  
\right. 
\end{equation}
Here 
\begin{align} \label{eq:Bogoliubov-Hamiltonian}
\bH(t)=&\int_{\R^d}a^\dagger(x) \big(-\Delta+|u(t)|^2\ast w-\epsilon(t) + K_1(t) \big)a(x)\,dx\\
&+\frac12\iint_{\R^d\times\R^d}\Big(K_2(t,x,y)a^\dagger(x)a^\dagger(y)+\overline{K_2(t,x,y)}a(x)a(y)\Big)dx\,dy\nn
\end{align}
is the Bogoliubov Hamiltonian $\bH(t)$ which, this time, is defined on the whole Fock space including the condensate mode $u(t)$. The operators $K_1$ and $K_2$ are restricted to $\{u(t)\}^\perp$ by means of the projection $Q(t)=1-|u(t)\rangle\langle u(t)|$, namely $K_1(t)=Q(t)\tilde K_1(t) Q(t)$ and $K_2(t)=Q(t)\tilde K_2(t) \overline{Q(t)}$ with $\tilde K_1(t,x,y)=u(x)\overline{u(y)}w(x-y)$ and $\tilde K_2(t,x,y)=u(x){u(y)}w(x-y)$.
The evolved Bogoliubov state $\Phi(t)$ can be proved to be orthogonal to $u(t)$ for all times, as expected.

There are many works on the time-dependent mean-field limit, in the canonical or grand canonical setting, and discussing Bogoliubov corrections or not. Most existing works dealing with the Bogoliubov corrections focus on the description of the fluctuations around a \emph{coherent state in Fock space}, see for example ~\cite{Hepp-74,GinVel-79,GriMacMar-10,Chen-12}. With the exception of~\cite{BenKirSch-13}, our work~\cite{LewNamSch-15} seems to be the only one dealing with fluctuations close to a Hartree state $u(t)^{\otimes N}$ in the $N$-body space. The derivation of the GP time-dependent equation in the dilute limit was shown in~\cite{ErdSchYau-10,Pickl-10}, but the Bogoliubov corrections have not been established so far in this case.

%%%%%%%%%%%%%%%%%%%%%%%%%%%%%%%%%%%%%%%%%%
\section{The infinitely extended Bose gas}
%%%%%%%%%%%%%%%%%%%%%%%%%%%%%%%%%%%%%%%%%%

We have considered (possibly only locally) trapped Bose systems with a particular emphasis on the mean-field regime where the interaction is multiplied by a factor $1/N$. Similar questions can be raised for an infinite Bose gas in the thermodynamic limit.
Since we do not have a small coupling constant $\lambda\sim 1/N$ anymore, the stability of the interaction must be assumed: 
\begin{equation}
\sum_{1\leq j<k\leq N}w(x_j-x_k)\geq -CN.
\label{eq:w_stable}
\end{equation}
For simplicity we always think of $w$ being a continuous function with sufficiently fast decay. 
% A typical stable potential can be written in the form $w=w_1+w_2$ with $w_1\geq 0$ and $\widehat{w_2}\geq0$,~\cite{Ruelle}.
The thermodynamic limit for the Bose gas is done by confining the system to a cube $\Omega_L$ of side length $L$ and then taking the limit $L\to\ii$ with $\rho=N/|\Omega_L|$ fixed (the choice of boundary conditions is unimportant). The energy per particle is known to converge:
$$e(\rho,w)=\lim_{\substack{N\to\ii\\ N/|\Omega_L|\to\rho}}\frac1{N} \inf\sigma_{\bigotimes_s^NL^2(\Omega_L)} \left(\sum_{j=1}^N-\Delta_{x_j}+\sum_{1\leq j<k\leq N}w(x_j-x_k)\right).$$
In a similar fashion, we can define the Gross-Pitaevskii energy of the infinite Bose gas by
\begin{multline*}
\eH(\rho,w)=\lim_{\substack{N\to\ii\\ N/|\Omega_L|\to\rho}}\frac1{N}\inf_{\int_{\Omega_L}|u|^2=N}\Bigg(\int_{\Omega_L}|\nabla u|^2\\+\frac12\int_{\Omega_L}\int_{\Omega_L}|u(x)|^2|u(y)|^2w(x-y)\,dx\,dy\Bigg). \end{multline*}
Of course, $e(\rho,w)\leq \eH(\rho,w)$ by the variational principle. 
Note that the GP energy satisfies the simple relation
\begin{equation}
\eH(\rho\lambda,w/\lambda)=\eH(\rho,w).
\label{eq:scaling_GP}
\end{equation}
Taking $u=\sqrt{N/|\Omega_L|}$ (for Neumann boundary conditions), we find that $\eH(\rho,w)\leq ({\rho}/2)\int_{\R^d}w$
but in general there is no equality. 
Integrating~\eqref{eq:w_stable} against $\prod_{j=1}^N\nu(x_j)$ and taking $N\to\ii$, we find that 
$\int_{\R^d}\int_{\R^d}\nu(x)\nu(y)w(x-y)\,dx\,dy\geq 0$ for all $\nu\geq0$, and this shows that $\eH(\rho,w)\geq0$. 

We ask when $e(\rho,w)$ is close to $\eH(\rho,w)$ and, as before, there are several possible regimes. Three typical situations have been studied:
\begin{description}
 \item[$\bullet$ mean-field limit] $\rho\to\ii$ with $w$ replaced by $w/\rho$;
 \item[$\bullet$ dilute limit] $\rho\to0$ with $w$ fixed;
 \item[$\bullet$ van der Waals \textnormal{or} Ka\v{c} limit] $\rho$ fixed and $w$ replaced by $\gamma^dw(\gamma x)$ with $\gamma\to0$.
\end{description}
When the problem is investigated at finite volume, the volume of the sample comes into play, which further complicates the analysis. 

We remark that, in the literature, the name \emph{mean-field model} is often used for the system in Fock space with Hamiltonian
$\dGamma(-\Delta)+\int w/(2|\Omega_L|)\cN^2$
where $\cN$ is the particle number operator~\cite{BerLewSme-84}. It has the advantage that the interaction commutes with the one-body part, but it is only appropriate to describe repulsive interactions in certain regimes.

\subsection{Mean-field limit}
Arguing exactly as in Section~\ref{sec:energy}, we can prove the 
\begin{theorem}[Estimate on the energy difference]\label{thm:Bose_gas}
Assume that $w$ is classically stable~\eqref{eq:w_stable}, with $\int_{\R^d}|w|+|\widehat{w}|<\ii$. Then we have
\begin{equation}
\eH(\rho,w)-\frac{1}{2(2\pi)^{d/2}} \int_{\R^d}|\widehat{w}|   \leq e(\rho,w)\leq \eH(\rho,w).
\label{eq:estimate_energy_difference}
\end{equation}
\end{theorem}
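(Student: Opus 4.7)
The upper bound $e(\rho,w)\le\eH(\rho,w)$ is immediate from the variational principle: for $u$ a trial function for $\eH$ on $\Omega_L$ with $\int|u|^2=N$, the Hartree state $(u/\sqrt N)^{\otimes N}$ has expected energy $\cE_{\rm GP}(u)-\frac{1}{2N}\iint |u|^2|u|^2 w$, and the correction is $O(1/N)$ per particle after dividing by $N$. So the work is in the lower bound.

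For the lower bound I would follow \emph{verbatim} the strategy of the second proof in Section~\ref{sec:energy}: decompose $w=w_1-w_2$ with $\widehat{w_i}=(\widehat w)_\pm\ge0$, so that $w_i\in L^1\cap L^\infty$ and $w_1(0)+w_2(0)=(2\pi)^{-d/2}\int|\widehat w|$. In a box $\Omega_L$ containing $2N$ particles with $2N/|\Omega_L|=\rho$, I would split the particles into two groups $x_1,\dots,x_N$ and $y_1,\dots,y_N$ and use bosonic symmetry on $\Psi_{2N}$ to rewrite
$$\pscal{\Psi_{2N},H^L_{2N}\Psi_{2N}}=\pscal{\Psi_{2N},\tilde H\,\Psi_{2N}}$$
where $\tilde H$ puts the repulsive $w_1$ interaction on the $x$'s, the repulsive $w_2$ on the $y$'s (via the identity from Section~\ref{sec:energy} that removes the within-$x$ $w_2$ term using symmetry), and an attractive $w_2$-cross term between species. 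Without the $1/(N-1)$ coupling the prefactors are slightly different — they equal $\binom{2N}{2}/\binom{N}{2}$ and the analogous ratios — but they still agree to leading order $\sim N$.

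Next, I would freeze $y_1,\dots,y_N$ and view $\tilde H$ as an operator on $L^2(\Omega_L)^{\otimes_s N}$ in the $x$'s. The Hoffmann-Ostenhof inequality (valid for $h=-\Delta$) handles the kinetic term. Applying the one-body lower bound for $w_1$ with $\eta=N\rho_{\Phi_N}$ turns the $x$-interaction into $\tfrac12\iint\rho\rho\,w_1$ at the cost of $-\tfrac{N}{2}w_1(0)$. The sum of the classical $w_2$ interaction among the $y$'s and the cross term between $x$ and $y$ is then bounded below by applying the same lemma again, with $\eta$ tuned (roughly $\eta=(N-1)\rho_{\Phi_N}$) so that the $y$-dependence cancels; this is exactly the trick used in Section~\ref{sec:energy} to make the resulting bound an operator bound, independent of the $y_\ell$'s. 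The remaining constant is $-\tfrac{N}{2}w_2(0)$, so that the two errors combine to the announced $-\tfrac{1}{2(2\pi)^{d/2}}\int|\widehat w|$ per particle.

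Putting this together gives, for every normalized bosonic $\Psi_{2N}$ on $\Omega_L$,
$$\frac{\pscal{\Psi_{2N},H^L_{2N}\Psi_{2N}}}{2N}\ge\frac{\cE_{\rm GP}\big(\sqrt{2N\rho_{\Psi_{2N}}}\big)}{2N}-\frac{w_1(0)+w_2(0)}{2}+o(1)$$
as $N\to\infty$ with $2N/|\Omega_L|=\rho$ fixed, and minimizing over $\Psi_{2N}$ then over $L$ yields the claimed lower bound on $e(\rho,w)$. The only genuine obstacle is bookkeeping: the absence of a coupling constant $1/(N-1)$ makes the combinatorial prefactors in the two-group rewriting less clean than in Section~\ref{sec:energy}, but since they all agree with their mean-field values to leading order in $N$ the extra terms contribute only $o(1)$ to the per-particle energy and are absorbed in the thermodynamic limit. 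As in Section~\ref{sec:energy}, non-integrable $\widehat w$ can be handled by a standard approximation.
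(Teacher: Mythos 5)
Your overall strategy is exactly the one the paper has in mind (it proves the theorem by the two-group argument of Section~\ref{sec:energy}), and the upper bound is fine -- note that classical stability even gives $\iint|u|^2|u|^2w\geq 0$, so the Hartree trial state has energy at most $\cE_{\rm GP}(u)$ and no error term is needed. The genuine gap is in the one step you wave through, namely the claim that the combinatorial prefactors ``agree with their mean-field values to leading order'' and only contribute $o(1)$ per particle. They do not: when you split the $2N$ particles into two groups of $N$ and reapportion the pair interactions by bosonic symmetry, the within-group sums carry the forced coefficient $\binom{2N}{2}/\binom{N}{2}=\tfrac{2(2N-1)}{N-1}$, i.e.\ an effective coupling $\simeq 2$ instead of $1$ relative to the Section~\ref{sec:energy} normalization. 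This factor $2$ is not a harmless correction: it is exactly what is needed in the main terms, because each group has only half the density and $\eH(\rho,w)=\eH(\rho/2,2w)$ by~\eqref{eq:scaling_GP}, but it multiplies the error terms $-\tfrac{N}{2}w_1(0)$ and $-\tfrac{N}{2}w_2(0)$ of Lemma~\eqref{eq:lower_bound_w} as well. Carrying out your scheme with all constants tracked gives, per particle, the error $-\tfrac{2N-1}{2(N-1)}\bigl(w_1(0)+w_2(0)\bigr)\to -(w_1(0)+w_2(0))=-\tfrac{1}{(2\pi)^{d/2}}\int_{\R^d}|\widehat w|$, which is twice the error announced in~\eqref{eq:estimate_energy_difference}; your displayed inequality with $-\tfrac{w_1(0)+w_2(0)}{2}$ is therefore not what the computation delivers.

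Moreover the discrepancy cannot be fixed by mere bookkeeping inside the same scheme: splitting instead into groups of sizes $\alpha\,2N$ (quantum) and $(1-\alpha)2N$ (classical) yields the per-particle error $\tfrac{w_1(0)}{2\alpha}+\tfrac{w_2(0)}{2(1-\alpha)}$, whose minimum over $\alpha$ is $\tfrac12\bigl(\sqrt{w_1(0)}+\sqrt{w_2(0)}\bigr)^2$, still strictly larger than $\tfrac12\bigl(w_1(0)+w_2(0)\bigr)$ whenever $\widehat w$ genuinely changes sign. So your proposal, as written, proves the weaker estimate with constant $\tfrac{1}{(2\pi)^{d/2}}\int|\widehat w|$ -- which is perfectly sufficient for the mean-field limit~\eqref{eq:mean-field_BG}, since any $\rho$-independent constant does the job there -- but it does not establish the constant $\tfrac{1}{2(2\pi)^{d/2}}\int|\widehat w|$ stated in~\eqref{eq:estimate_energy_difference}. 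To claim that constant you must either produce a sharper treatment of the attractive part (the factor-$2$ loss is intrinsic to handing all the kinetic energy to one group and all of $w_2$ to the classical group), or state the result with the doubled error; the sentence asserting that the modified prefactors are absorbed in $o(1)$ is precisely where the proof, in its present form, fails.
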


% \begin{proof}
% The proof follows along the lines of Section~\ref{sec:energy}, without the factor $1/(N-1)$. We split a system of $2N$ particles in a cube $\Omega_L$, with $\rho=2N/|\Omega_L|$, into two groups and get $\pscal{\Psi_{2N},H_{2N}\Psi_{2N}}=\langle\Psi_{2N},\tilde H_N\Psi_{2N}\rangle$ where
% \begin{align*}
% \tilde H_N = & 2\sum_{j=1}^N(-\Delta)_{x_j}+\left(4+\frac{2}{N-1}\right)\sum_{1\leq j<k\leq N}w_1(x_j-x_k)\\
% &+\left(4+\frac{6}{N-1}\right)\sum_{1\leq \ell<m\leq N}w_2(y_\ell-y_m) -4\sum_{j=1}^N\sum_{\ell=1}^Nw_2(x_j-y_\ell).
% % \geq & 2\sum_{j=1}^N(-\Delta)_{x_j}+4\sum_{1\leq j<k\leq N}w_1(x_j-x_k)+4\sum_{1\leq \ell<m\leq N}w_2(y_\ell-y_m)\\
% % &-4\sum_{j=1}^N\sum_{\ell=1}^Nw_2(x_j-y_\ell)-\frac{N}{N-1}\big(w_1(0)+3w_2(0)\big)
% \end{align*}
% % where in the lower bound we have used~\eqref{eq:lower_bound_w} with $\eta=0$. 
% Arguing then as in Section~\ref{sec:energy} with an $N$-particle function $\Phi_N$, we find a lower bound with a Gross-Pitaevskii energy that has a 2 in front of its kinetic energy and a 4 in front of the interaction term. Setting $\tilde\rho_N=2\rho_{\Phi_N}$, this exactly compensates the fact that $|\Omega_L|^{-1}\int_{\Omega_L}\rho_{\Phi_N}=\rho/2$.
% \end{proof}

% For a positive-definite potential, the lower bound is well-known to be optimal for a classical system (dropping the kinetic energy)~\cite{Lieb-63b,LewPulSme-84,Conlon-84}.

The estimate~\eqref{eq:estimate_energy_difference} is only interesting in a regime where $\int|\widehat{w}|\ll \eH(\rho,w)$. The simplest regime of this kind is the mean-field limit $\rho\to\ii$ with $w$ replaced by $w/\rho$, since $\eH(\rho,w/\rho)=\eH(1,w)$ by~\eqref{eq:scaling_GP}. Theorem~\ref{thm:Bose_gas} then becomes
\begin{corollary}[Mean-field limit of the interacting the Bose gas]
If $w$ is classically stable~\eqref{eq:w_stable}, with $\int_{\R^d}|w|+|\widehat{w}|<\ii$, then 
\begin{equation}
\boxed{\lim_{\rho\to\ii}e(\rho,w/\rho)=\eH(1,w).}
\label{eq:mean-field_BG}
\end{equation}
\end{corollary}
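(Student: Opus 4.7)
The plan is to apply Theorem~\ref{thm:Bose_gas} directly with $w$ replaced by the rescaled potential $w/\rho$, and then to combine its two-sided estimate with the scaling identity~\eqref{eq:scaling_GP} in order to send $\rho\to\infty$. Concretely, I would first check that the hypotheses of Theorem~\ref{thm:Bose_gas} transfer to $w/\rho$: classical stability~\eqref{eq:w_stable} is inherited since $\sum_{j<k}(w/\rho)(x_j-x_k)\geq -(C/\rho)N\geq -CN$ for $\rho\geq 1$, and the integrability hypothesis $\int(|w|+|\widehat{w}|)<\infty$ passes to $w/\rho$ trivially (in fact the norms are divided by $\rho$).

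Next, I would invoke the scaling identity $\eH(\rho,w/\rho)=\eH(1,w)$, which is exactly~\eqref{eq:scaling_GP} with $\lambda=1/\rho$. Feeding $w/\rho$ into~\eqref{eq:estimate_energy_difference} of Theorem~\ref{thm:Bose_gas} and using that $\widehat{w/\rho}=\widehat{w}/\rho$, I get
\begin{equation*}
\eH(1,w)-\frac{1}{2(2\pi)^{d/2}\rho}\int_{\R^d}|\widehat{w}|\;\leq\; e(\rho,w/\rho)\;\leq\;\eH(1,w).
\end{equation*}
Since $\int|\widehat{w}|<\infty$, the error term $\frac{1}{2(2\pi)^{d/2}\rho}\int|\widehat{w}|$ tends to $0$ as $\rho\to\infty$, and the sandwich yields the claimed limit $\lim_{\rho\to\infty}e(\rho,w/\rho)=\eH(1,w)$.

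There is essentially no obstacle here, as this corollary is a direct consequence of the theorem it is stated beneath. The only point one should be slightly careful about is that Theorem~\ref{thm:Bose_gas} is stated under a fixed interaction $w$, so one must explicitly verify (as above) that the hypotheses are preserved when replacing $w$ by $w/\rho$; once that is done, the scaling identity~\eqref{eq:scaling_GP} makes the Gross--Pitaevskii side constant in $\rho$, and the Fourier-side error vanishes by the integrability assumption on $\widehat{w}$.
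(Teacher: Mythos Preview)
Your proposal is correct and follows exactly the approach the paper intends: apply the two-sided estimate~\eqref{eq:estimate_energy_difference} of Theorem~\ref{thm:Bose_gas} with $w$ replaced by $w/\rho$, use the scaling relation~\eqref{eq:scaling_GP} to make the GP side equal to $\eH(1,w)$, and let the $O(1/\rho)$ Fourier error vanish. The paper states this in one line without spelling out the hypothesis check for $w/\rho$, but your verification of it is appropriate and the argument is the same.
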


This is well-known for positive-definite interactions, but does not seem to have been noticed for general interactions. The constraint that $\int_{\R^d}|\widehat{w}|<\ii$ can be dropped without changing~\eqref{eq:mean-field_BG}, with a worse error term in~\eqref{eq:estimate_energy_difference}.

\subsubsection*{Positive-definite interactions}
The simplest case, which has been studied the most in the literature, is when $\widehat{w}\geq0$. Then 
$\eH(\rho,w)= ({\rho}/2)\,\int_{\R^d}w$
for all $\rho>0$~\cite{Lieb-63b,LewPulSme-84}, and~\eqref{eq:mean-field_BG} becomes
\begin{equation}
\lim_{\rho\to\ii}e(\rho,w/\rho)=\frac12\int_{\R^d}w.
\label{eq:limit_MF}
\end{equation}
% The limit is purely classical, the Laplacian plays absolutely no role on first order. 
Bogoliubov's theory predicts that the second order correction is
\begin{multline}
\lim_{\rho\to\ii}\rho\left(e(\rho,w/ \rho)-\frac{1}2\int_{\R^d}w\right)\\
=-\frac{1}{2(2\pi)^d}\int_{\R^d}\left(|k|^2+(2\pi)^{d/2}\widehat{w}(k)-|k|\sqrt{|k|^2+2(2\pi)^{d/2}\widehat{w}(k)}\right)\,dk
\label{eq:CV_Bogo_gas_GS}
\end{multline}
and that the joint energy-momentum spectrum converges to the formula
\begin{equation}
|k|\sqrt{|k|^2+2(2\pi)^{d/2}\widehat{w}(k)}.
\label{eq:CV_Bogo_gas_ES}
\end{equation}
The latter gives the phonon spectrum when $w\propto \delta$ and a phonon-maxon-roton spectrum for generic $w$'s, as displayed in Figure~\ref{fig:phonon-roton}. A proof of~\eqref{eq:CV_Bogo_gas_GS} can be obtained by following the approach of~\cite{GiuSei-09}. The derivation of the excitation spectrum~\eqref{eq:CV_Bogo_gas_ES} is a famous open problem that was studied in~\cite{CorDerZin-09,DerNap-14}. 
% The main difficulty in proving~\eqref{eq:CV_Bogo_gas_ES} is to deal with the thermodynamic limit and in particular with the fact that there is no uniform gap above the ground state as was the case in Theorem~\ref{thm:Bogoliubov}. 
The time-dependent problem was studied in~\cite{DecFroPicPiz-14}.

\subsubsection*{General interactions}
When $\widehat{w}$ has no sign, we need a condition ensuring that the limit is non-trivial. In particular we expect that $\eH(1,w)>0$, and the correction $-\rho^{-1 }\int|\widehat{w}|$ is then a lower order term. A very natural situation is when the interaction $w$ is \emph{superstable}, which means that 
\begin{equation}
 \sum_{1\leq j<k\leq N} w(x_j-x_k)\geq \frac{\epsilon}{2|\Omega|}N^2-CN 
 \label{eq:superstable}
\end{equation}
for some $\epsilon>0$ and any $x_j$'s in a large-enough cube $\Omega$. 
% A typical example is that of $w\geq0$ with $w(0)>0$ or $\widehat{w}\geq0$ with $\widehat{w}(0)>0$~\cite{LewPulSme-84}. 
For a superstable interaction, we have
% $$\int_{\Omega_L}\int_{\Omega_L}\nu(x)\,\nu(y)\,w(x-y)\,dx\,dy\geq \epsilon\frac{\left(\int_{\Omega_L}\nu\right)^2}{|\Omega_L|},\qquad\text{for all $\nu\geq0$,}$$
% and it is then clear that 
$\eH(1,w)\geq \epsilon/2>0$, as expected.

For a partially attractive potential, translation-invariance could be broken, with a GP minimizer which is not constant but instead displays some periodicity. This is confirmed by a numerical simulation with the Lennard-Jones potential in one dimension in Figure~\ref{fig:1D}. Similar results have been observed for soft ball potentials in higher dimensions in~\cite{JosPomRic-07b,AftBlaJer-09,KunKat-12}. For $\rho\gg1$ this proves a form of symmetry breaking for the many-particle problem as well, but a precise description of this effect for the many-particle ground state is a delicate question.

\begin{figure}[h]
\centering
\includegraphics[width=10cm]{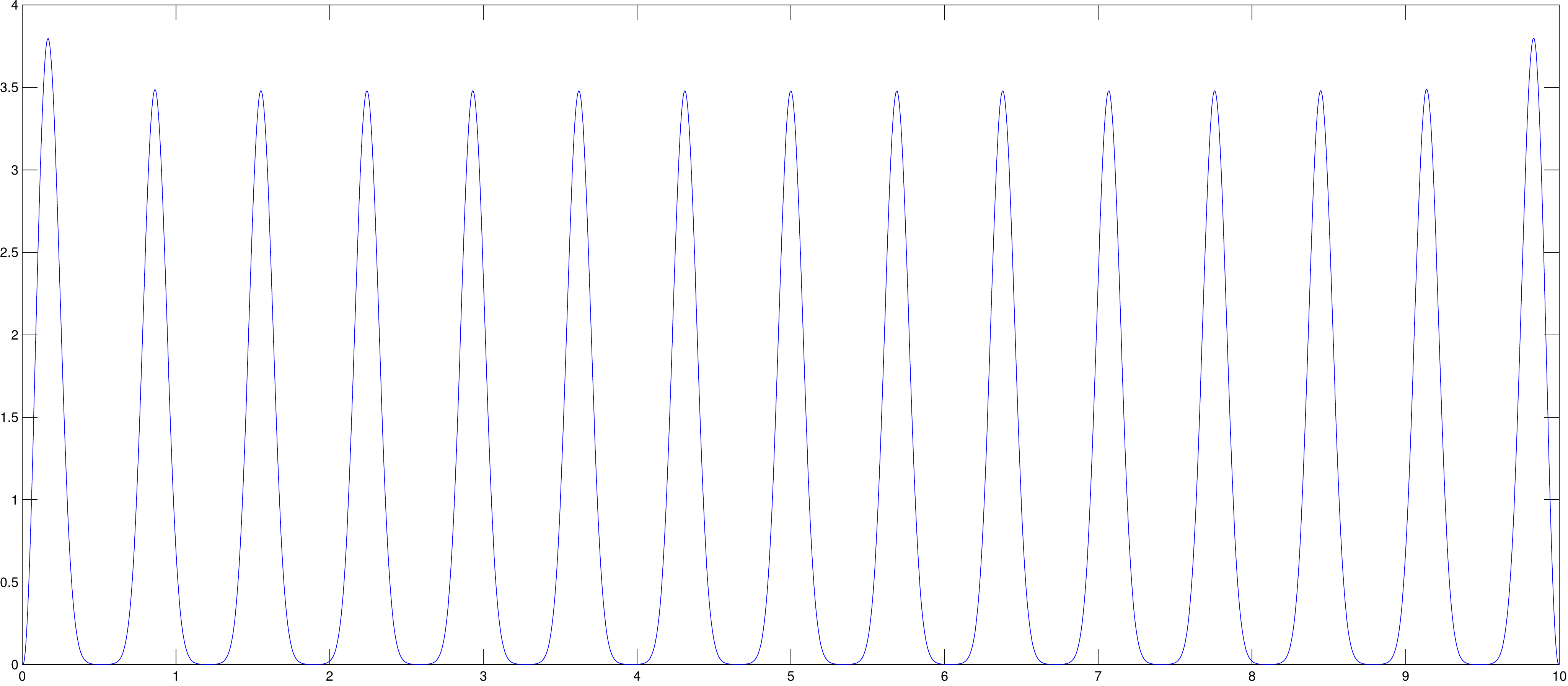}
\caption{\small Numerical calculation of the GP ground state density $|u|^2$ in 1D for the truncated Lennard-Jones potential $w(x)=\min(10^3,|x|^{-12}-|x|^{-6})$, showing the breaking of translational symmetry. Here $N=10$, $\rho=1$ and we have used Dirichlet boundary conditions.%, on a mesh with $2\times10^3$ points on the interval $[0,10]$.
\label{fig:1D}}
\end{figure}

\subsection{Dilute limit}
By scaling we have
% \begin{equation}
$e(\rho,w)=\ell^{-2}\, e(\ell^d\rho,\ell^2w(\ell\cdot))$
%  \label{eq:scaling_BG}
% \end{equation}
In dimension $d=3$ it is natural to choose $\ell=\rho^{-1/2}$ which gives $e(\rho,w)=\rho\; e\left(\ell,w_\ell/\ell\right)$ with $w_\ell(x)=\ell^{3}w(\ell x)$. This way we can map the low density regime $\rho\to0$ (with fixed $w$) to an effective high density $\ell=\rho^{-1/2}\to\ii$ with the interaction $w_\ell$. For a positive-definite interaction, the asymptotics 
$$e(\rho,w)=4\pi\,a\,\rho+o(\rho)_{\rho\to0}$$
proved in~\cite{Dyson-57,LieYng-98,Seiringer-99,LieSeiSolYng-05}, can be obtained by formally using~\eqref{eq:limit_MF}, with $\int_{\R^3} w$ replaced by $8\pi a$. An important open problem is to derive Bogoliubov's correction (the Lee-Huang-Yang formula~\cite{LeeYan-57,LeeHuaYan-57,Lieb-63b})%,BruSaw-57,Beliaev-58,Wu-59,HugPin-59}
$$e(\rho,w)=4\pi\,a\,\rho\left(1+\frac{128}{15\sqrt\pi}\sqrt{\rho a^3}%+8\left(\frac{4\pi}{3}-\sqrt3\right)(\rho a^3)\log(\rho a^3)
+o(\sqrt{\rho})_{\rho\to0}\right).$$
The upper bound was shown in~\cite{ErdSchYau-08,YauYin-09}.

\subsection{Van der Waals / Ka\v{c} limit}
The idea here is to spread the interaction very much to make it look like a constant, that is, $w$ is replaced by $w_\gamma=\gamma^d w(\gamma\cdot)$ with $\gamma\to0$ at fixed $\rho$. 
By scaling we see that $e(\rho,w_\gamma)=\gamma^2 e\big(\gamma^{-d}\rho,\gamma^{d-2}w\big)$. 
For a superstable interaction, we have $e(\rho,w_\gamma)\geq \epsilon \rho-C\gamma^d$ and~\eqref{eq:estimate_energy_difference} gives again 
$$\lim_{\gamma\to0}\frac{e(\rho,w_\gamma)}{\eH(\rho,w_\gamma)}=1.$$
Using~\eqref{eq:scaling_GP}, the GP energy becomes $\eH(\rho,w_\gamma)=\gamma^2 \eH\big(1,\rho\gamma^{-2}w\big)$ and this corresponds to adding a factor $\gamma^2$ in front of the kinetic energy. Therefore the limit is purely classical.
% $$\lim_{\gamma\to0}e(\rho,w_\gamma)=\rho \lim_{|\Omega_L|\to\ii}\inf_{\substack{\nu\geq0\\ \int_{\Omega_L}\nu=|\Omega_L|}}\frac{1}{2|\Omega_L|}\int_{\Omega_L}\int_{\Omega_L}\nu(x)\nu(y)w(x-y)\,dx\,dy.$$
When $\widehat{w}\geq0$, we obtain in any dimension
$$\lim_{\gamma\to0}e(\rho,w_\gamma)=\frac{\rho}2\int_{\R^d} w.$$
This was studied in many works, including for instance~\cite{Lieb-66,BufSmePul-83,BerLewSme-83,SmeZag-87,MarPia-03,AlaPia-11}.

\subsection{Bose-Einstein condensation and critical temperature}
So far our discussion was focused on the ground state energy. A much more delicate matter is to understand Bose-Einstein condensation in the Bose gas, that is, whether the one-particle density matrix of the many-particle ground state has a macroscopic occupation in the GP ground state in the thermodynamic limit. 
% Since there is no simple quantum theory in the whole space, this problem should rather be studied in a finite box, which then requires to combine the limit $L\to\ii$ with the different regimes for $\rho$. 
This question is far from being settled at present. Some important works in this direction are~\cite{DysLieSim-78,LauVerZag-03}.

Another important problem is to understand the role of the temperature and in particular to find the critical temperature below which Bose-Einstein condensation occurs. Here we have only discussed the zero-temperature case. In the Ka\v{c} limit with positive-definite interactions, this was studied in~\cite{Lieb-66,BufSmePul-83,BerLewSme-83,SmeZag-87,MarPia-03,AlaPia-11}. For the dilute regime the situation is not yet well understood, see~\cite{Seiringer-08,SeiUel-09,Yin-10}.

\medskip

\small \noindent\textbf{Acknowledgement.} This research has received financial support from the European Research Council under the European Community's Seventh Framework Programme (FP7/2007-2013 Grant Agreement MNIQS 258023).

%%%%%%%%%%%%%%%%%%%%%%%%%%%%%%%%%%%%%%%%%%
%%%%%%%%%%%%%%%%%%%%%%%%%%%%%%%%%%%%%%%%%%
% \bibliographystyle{siam}
% \bibliography{biblio}

\end{document}